\pgfplotsset{compat=1.6}
\newtheorem{theorem}{Theorem}
\newtheorem{definition}{Definition}
\newtheorem{corollary}{Corollary}
\begin{document}

\title[deepBF]{deepBF: Malicious URL detection using Self-adjusted Bloom Filter and Evolutionary Deep Learning}

\author*[1]{\fnm{Ripon} \sur{Patgiri}}\email{ripon@cse.nits.ac.in}

\author[1]{\fnm{Anupam} \sur{Biswas}}\email{anupam@cse.nits.ac.in}
\equalcont{These authors contributed equally to this work.}

\author[1]{\fnm{Sabuzima} \sur{Nayak}}\email{sabuzima\_rs@cse.nits.ac.in}
\equalcont{These authors contributed equally to this work.}

\affil*[1]{\orgdiv{Department of Computer Science \& Engineering}, \orgname{National Institute of Technology Silchar}, \orgaddress{\city{Silchar}, \postcode{788010}, \state{Assam}, \country{India}}}




\abstract{Malicious URL detection is an emerging research area due to continuous modernization of various systems, for instance, Edge Computing. In this article, we present a novel malicious URL detection technique, called deepBF (deep learning and Bloom Filter). deepBF is presented in two-fold. Firstly, we propose a self-adjusted Bloom Filter using 2-dimensional Bloom Filter. We experimentally decide the best non-cryptography string hash function. Then, we derive a modified non-cryptography string hash function from the selected hash function for deepBF by introducing biases in the hashing method and compared among the string hash functions. The modified string hash function is compared to other variants of diverse non-cryptography string hash functions. It is also compared with various filters, particularly, counting Bloom Filter, Kirsch \textit{et al.}, and Cuckoo Filter using various test cases. The test cases unearth weakness and strength of the filters. Secondly, we propose a malicious URL detection mechanism using deepBF. We apply the evolutionary convolutional neural network to identify the malicious URLs. The evolutionary convolutional neural network is trained and tested with malicious URL datasets. The output is tested in deepBF for accuracy. We have achieved many conclusions from our experimental evaluation and results and are able to reach various conclusive decisions which are presented in the article.}

\keywords{Bloom Filter, Learned Bloom Filter, Multidimensional Bloom Filter, Membership Filter, Malicious URL Detection, Deep Learning, Evolutionary Deep Neural Networks, Deep Convolutional Neural Networks, Neural Networks, Computer Networking.}



\maketitle

\section{Introduction}\label{sec1}
Bloom Filter \cite{Bloom} is a famous hash data structure for membership filtering which uses a tiny amount of memory. It is known as an approximate membership filter. This tiny filter is applied in numerous research fields. For instance, BigTable \cite{BigTable} uses Bloom Filter to enhance the lookup performance. BigTable reduces unnecessary HDD access by deploying Bloom Filter. Similarly, it is deployed in various domains, namely, Big Data, Network Security \cite{DDoS,PassDB}, Network Traffic, IoT \cite{Singh}, and Bioinformatics \cite{Bio}. Besides, there are an abundant of network devices that depends on Bloom Filter. Thus, there is an immense necessity for a high accuracy Bloom Filter in Computer Networking as well as other domains because Bloom Filter can foster a system's performance and reduces the main memory consumption. 

There are diverse variants of Bloom Filters which are introduced to address several issues, for instance, counting Bloom Filter for caching URL purposes \cite{countingBF,Kirsch}. There are also similar variants of Bloom Filter, particularly, Cuckoo Filter \cite{Cuckoo}. Moreover, Patgiri \textit{et al.} introduces multidimensional Bloom Filter, called rDBF \cite{rDBF}. HFil is a high accuracy Bloom Filter extended from rDBF \cite{HFil}. Recently, a learned Bloom Filter (LBF) is introduced by M. Mitzenmacher \cite{Mitzenmacher}. LBF is currently trending in Bloom Filter. It is a combination of machine learning and Bloom Filter. Inspired from this LBF, we propose a novel technique to identify the malicious URL using evolutionary convolutional neural network (evoCNN) and Bloom Filter.  

In this article, we propose a novel self-adjusted Bloom Filter, called deepBF (Deep Learning and Bloom Filter). The complete proposed system is as follows- let, $\psi$ be a URL, $\mu\mathbb{BF}$ be the Bloom Filter to cache  malignant URL, $\beta\mathbb{BF}$ be the Bloom Filter to cache benign URLs and $\epsilon CNN$ be the evolutionary convolutional neural networks. First, a query item $\psi$ is queried to $\mu\mathbb{BF}$ for membership and if $\mu\mathbb{BF}$ returns true, then deepBF will block the URL $\psi$. Otherwise, query to $\beta\mathbb{BF}$ for membership. If $\beta\mathbb{BF}$ returns true, then the URL $\psi$ is allowed. Otherwise, $\psi$ is a new URL. Therefore, the new URL $\psi$ is input to $\epsilon CNN$ for classification. If $\epsilon CNN$ identify the URL $\psi$ as malignant, then deepBF will insert the URL $\psi$ into $\mu\mathbb{BF}$ and blocks the URL $\psi$. Otherwise, deepBF will insert the URL $\psi$ into $\beta\mathbb{BF}$ and allow the URL. This procedure reduces the load on $\epsilon CNN$ significantly. It also reduces loads on computational devices.

To achieve our proposed system, we present it in two-fold. Firstly, deepBF is designed by performing contest among the non-cryptography string hash functions in 2-Dimensional Bloom Filter (2D Bloom Filter) \cite{rDBF} using various use cases and select the best non-cryptography string hash functions. Experimental results provide the justification for not selecting cryptography string hash functions. As per our observation, the murmur2 hash function is a consistent performer and selected it to use in deepBF. The Murmur2 hash function is modified for higher performance and the resultant hash function is used in deepBF. The resultant hash function contains high biases and redundancies. However, our experimental results show that higher biases and redundancies do not affect the false positive probability (FPP) of Bloom Filter. After building a modified string hash function, deepBF is compared with Kirsch \textit{et al.} \cite{Kirsch}, counting Bloom Filter \cite{countingBF,dbloom} and Cuckoo Filter (CF) \cite{Cuckoo}. Kirsch \textit{et al.} is a modified conventional Bloom Filter, CBF is a counting Bloom Filter and CF is a similar variant of Bloom Filter. Thus, our proposed Bloom Filter is compared to prominent variant of filters. Our result shows, deepBF outperforms in different use cases. Secondly, deepBF is tested using malicious URL detection using evoCNN and proposed Bloom Filter. evoCNN is trained and tested with malicious URL dataset and we have used the dataset of \cite{Mamun} hosted in \cite{data}. The malignant and benign URLs are also tested in Bloom Filter. From this article, we have revealed strengths and weaknesses of the filters. Also, we present numerous concrete decision on Bloom Filters from our experimental results.

This article establishes preliminaries, terminologies and techniques in Section \ref{pr} which are to be used in further sections. It presents concise descriptions of Bloom Filter and its operations, and non-cryptography string hash functions. Then, provides a few related works in Section \ref{RW}. Our proposed work is described clearly through figures, equations and algorithms in Section \ref{PS}. Section \ref{ER} demonstrates the experimental environment, experimenting process and its results. Similarly, Section \ref{Ana} provides detailed analysis on our proposed systems. Likewise, a brief discussion is carried out in Section \ref{Dis}. Finally, this article is concluded with several decisions in Section \ref{Dis}.

\section{Preliminary}
\label{pr}

\subsection{Bloom Filter}
Bloom Filter is a probabilistic data structure for membership filtering capable of filtering the massive amount of data using a small memory footprint. Bloom Filter has two key issues, namely, false positives and false negatives. When a Bloom Filter avoids deletion operation, the false negative probability becomes zero, therefore, the accuracy of Bloom Filter depends on the false positive probability (FPP) of the filter. There are many variants of Bloom Filter which are introduced to reduce the issues of Bloom Filter \cite{Luo}. Also, diverse variants of Bloom Filters are introduced to address various challenges in diverse applications \cite{HLim,Fuzzy,lim}. The performance and false positive probability of Bloom Filter depend on number of hash functions. Therefore, an optimal number of hash functions are used in Bloom filter \cite{Kirsch}. If the number of hash function calls is large then it can degrade the insertion and lookup performance. If the number of hash function calls is small, then it can increase the false positive probability, but enhance the performance of insertion and lookup operations. To increase performance, we reduce the number of hash functions calls while maintaining a low false positive probability. 

Let, $\mathbb{B}$ be the Bloom Filter of size $m$ bits. The Bloom Filter has $1,~2,~3,~\ldots,~m$ cells where each cell can hold one bit, either $0$ or $1$. Let, $\mathrm{U}=\{\mathcal{K}_1,~\mathcal{K}_2,~\mathcal{K}_3,\ldots\}$ be the universe. An item $\mathcal{K}_j\in\mathrm{U}$ is mapped into Bloom Filter using $\lambda$ hash functions, let the hash functions be $\mathcal{H}_1(\mathcal{K}_j),~\mathcal{H}_2(\mathcal{K}_j),~\mathcal{H}_3(\mathcal{K}_j),~\ldots,\mathcal{H}_\lambda(\mathcal{K}_j)$. A $\lambda$ number of hash functions are invoked in insertion, deletion and query (lookup) operations. Let, $\mathcal{S}=\{\mathcal{K}_1^i,~\mathcal{K}_2^i,~\mathcal{K}_3^i,\ldots,\mathcal{K}_n^i\}$ be the inserted set of the Bloom Filter $\mathbb{B}$ where $\mathcal{S}\subset\mathrm{U}$ and $n$ is the total number of items inserted into the Bloom Filter. Let, $\mathcal{K}_i$ be the random query. The true positive, false positive, false negative and true negative are defined in Definition \ref{def1}, \ref{def2}, \ref{def3} and \ref{def4} respectively.

\begin{definition}\label{def1}
If $\mathcal{K}_i\in\mathcal{S}$ and  $\mathcal{K}_i\in\mathbb{B}$, then the result of Bloom Filter $\mathbb{B}$ is called true positive.
\end{definition}
\begin{definition}\label{def2}
If $\mathcal{K}_i\not\in\mathcal{S}$ and  $\mathcal{K}_i\in\mathbb{B}$, then the result of Bloom Filter $\mathbb{B}$ is called false positive.
\end{definition}
\begin{definition}\label{def3}
If $\mathcal{K}_i\in\mathcal{S}$ and  $\mathcal{K}_i\not\in\mathbb{B}$, then the result of Bloom Filter $\mathbb{B}$ is called false negative.
\end{definition}
\begin{definition}\label{def4}
If $\mathcal{K}_i\not\in\mathcal{S}$ and  $\mathcal{K}_i\not\in\mathbb{B}$, then the result of Bloom Filter $\mathbb{B}$ is called true negative.
\end{definition}

Bloom Filter $\mathbb{B}$ uses $m$ bits for $n$ items. Therefore, the probability of a bit to be $0$ is $(1-\frac{1}{m})$. The probability of a bit not set to $1$ using $\lambda$ hash function is
\begin{equation}\label{eq1}
    \left(1-\frac{1}{m}\right)^\lambda=\left(\left(1-\frac{1}{m}\right)^m\right)^{\frac{\lambda}{m}}=e^{-\lambda/m}
\end{equation}
where 
\begin{equation*}
    \lim_{m \to \infty} \left(1-\frac{1}{m}\right)^m=\frac{1}{e}
\end{equation*}
After insertion of $n$ items, he probability of a bit not set to $1$ is $e^{-\lambda n/m}$. Therefore, the probability of the bit to be $1$ is $1-e^{-\lambda n/m}$. Let, $\varepsilon$ be the desired false positive probability, then the all bits to be set to $1$ is 
\begin{equation}\label{eq2}
    \varepsilon=(1-e^{-\lambda n/m})^\lambda
\end{equation}
The value of $\lambda$ that minimizes false positive probability is given in Equation \eqref{eq3}.
\begin{equation}\label{eq3}
    \lambda=\frac{m}{n}ln2
\end{equation}
Replacing value of $\lambda$ and taking $ln$ in both sides in Equation \eqref{eq2}, we get
\begin{equation}\label{eq4}
    m=-\frac{n~ln~\varepsilon}{(ln~2)^2}
\end{equation}
Equation \eqref{eq4} gives us the total memory requirements for $n$ input items.

\begin{figure}[!ht]
    \centering
    \includegraphics[width=0.4\textwidth]{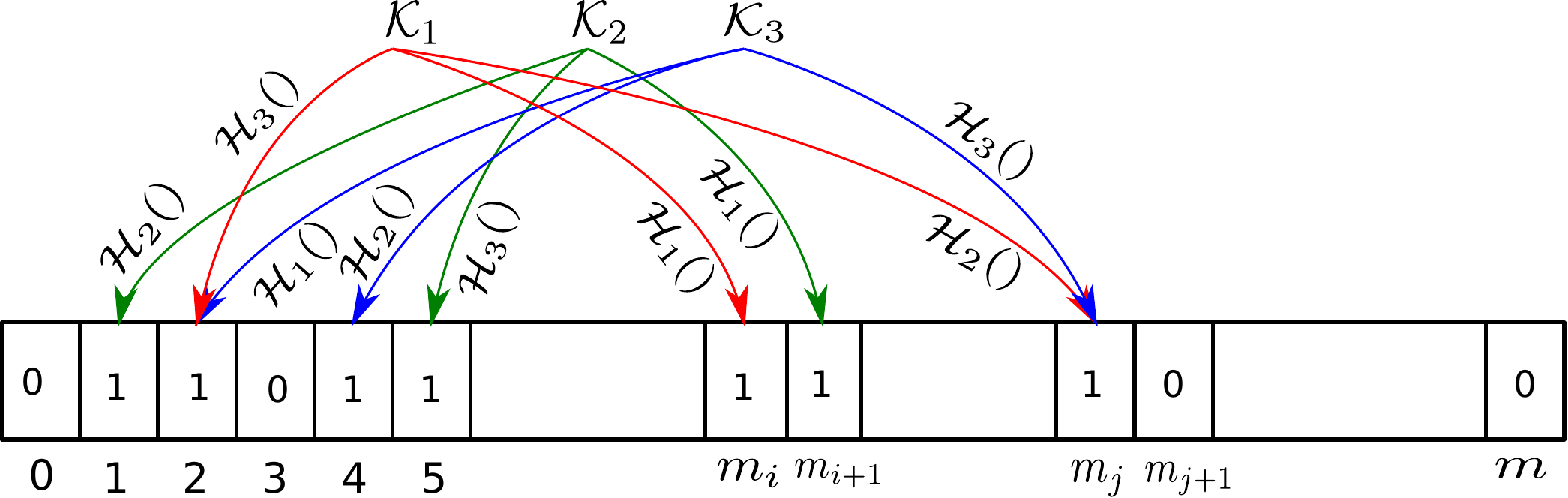}
    \caption{Mapping of $\mathcal{K}_1$, $\mathcal{K}_2$ and $\mathcal{K}_3$ into Bloom Filter using $k=3$ hash functions and these hash functions are $\mathcal{H}_1()$, $\mathcal{H}_2()$, $\mathcal{H}_3()$.}
    \label{bloom}
\end{figure}

\subsection{Operations}
Bloom Filter supports three operations, namely, insertion, deletion and query (lookup) operations. For these operations, Bloom Filter does not require complex hash functions. Instead, Bloom Filter requires the fastest non-cryptography string hash functions. Cryptography string hash function makes Bloom Filter slower, and thus, it is not wise to use MD5 and SHA2. Murmur, SuperFastHash and xxHash hash functions can be used in Bloom Filter for its operations. Bloom Filter does not require cryptography string hash function due to two reasons, namely, a) it slows down the Bloom filter performance, and b) it is unable to reduce to false positive probability. Therefore, most of the Bloom Filter uses Murmur hash functions, for instance, rDBF \cite{rDBF}. 

\subsection{Hashing Techniques}
Hashing is another factor that influences the performance of a Bloom Filter. The time complexity of the Bloom Filter operations depends on the number of hashing operations performed. 
\subsubsection{Murmur}
Murmurhash is designed by Austin Appleby in 2008 \cite{Murmur}. The name is constructed using two basic operations murmurhash perform in its inner loop, namely, multiply (MU) and rotate (R). It is a non-cryptographic hash function which helps in common hash based query. It is open to public. Various versions are also developed to improve the performance. Currently the latest version is MurmurHash3. 

\subsubsection{FNV}
Fowler/Noll/Vo (FNV) \cite{FNV} is a non-cryptography hashing technique. The technique maintains a low collision rate. FNV has high dispersion. It makes FNV suitable for hashing of similar items. In FNV, items are quickly processed while maintaining low collision rate. The cryptography hashing technique is computationally expensive to strongly prevent brute force inversion, but FNV is inexpensive. A cryptography hash function does not remain in a single state for a long time. However, in FNV hash value may be 0 and also remains in that state until a non-zero item is encountered. Moreover, when a small unpredictable item gets included in the input set FNV produces a 0 hash value, and a cryptography hash function generates a complex hash value to increase complexity, but in FNV the least significant bits of the hash value are easily visible. The available versions are FNV-1 and FNV-1a. FNV-1a performs multiply and XOR operations in a different order compared to FNV-1. This change in the order of operation resulted in better avalanche characteristics. Avalanche characteristic is a property of cryptography technique which refers to slight variation in input item heavily affects the hash value.

\subsubsection{FastHash}
FastHash \cite{fasthash} is simple non-cryptography string hash function. By default, FastHash produces 64 bits hash code. For 32 bits hash code, it deducts 32 bits code from 64 bits hash code. It is similar to Murmur hash function.

\subsubsection{CRC32}
Peterson and Brown \cite{CRC} proposed cyclic redundancy check (CRC) for error detection. It is commonly used in networking and storage devices. It helps to detect accidental alteration to data. CRC name is derived from the operations performed. The check value produced by CRC is redundancy. And, the algorithm uses cyclic codes. CRC generates a binary string of fixed length called check value. The check value is included to transmitting data. A check value is included in each data block to form a codeword. On the receiver side, again a check value is calculated for the data block or CRC is applied on whole codeword. Then, both the codewords are compared with a residue constant. In case the values differ, then data error is present in the block. CRC is used for hashing because it produces a fixed length check value. CRC32 is a 32-bit cyclic redundancy code. It returns a 32 bit long string as output. It hashes the string with less collisions. Advantages of CRC are easy implementation using a binary hardware, simple and easy mathematical analysis, and efficiently determines common errors caused by transmission channel noise.

\subsubsection{SuperfastHash}
Paul Hsieh \cite{SFH} developed a non-cryptography hash function called Superfasthash. This algorithm uses fewer instructions per input fragment. The input fragment is of 16 bits. The inner loop of the algorithm interleaves two 16 bit words. Moreover, the parameters used in the algorithm tries to give high avalanche effect.

\subsubsection{xxHash}
xxHash \cite{XXH} is a non-cryptography hashing algorithm developed by Yann Collet. It optimizes all operations to execute faster. It partition the input items into four independent streams. The responsibility of each stream is to execute block of 4 bytes per step. Each stream stores a temporary state. In the final step, all four states are combined to obtain a single state. The most important advantage of xxHash is that it’s code generator gets many opportunities to re-order opcodes to prevent delay.

\section{Related work}
\label{RW}

Kirsch \textit{et al.} proposes to reduce the number of hash functions while maintaining same FPP \cite{Kirsch}. The proposed method improves the lookup and insertion performance of Bloom Filter by reducing the number of hash functions in the conventional Bloom Filter.  Counting Bloom Filter (CBF) introduces counters for insertion and deletion operations \cite{countingBF}. Counters are decremented in deletion operations and incremented in insertion operations. It is the first variant of Bloom Filter to efficiently handle deletion operation with almost false negative free. Conventional Bloom Filter avoids deletion operation due to the false negative issue. Interestingly, CBF removes this issue using counters. However, CBF has also false negatives if counters underflow. However, this case is rare. Another kind of membership filtering is Cuckoo Filter (CF) \cite{Cuckoo}. CF uses cuckoo hashing \cite{cuckoohashing} and it is faster than Bloom Filter. 

\subsection{Learned Bloom Filter}
Learned Bloom Filter (LBF) is proposed by M. Mitzenmacher \cite{LBF} which was derived from Kraska \textit{et al.} \cite{Kraska}. LBF becomes popular from the work of M. Mitzenmacher \cite{LBF} which is generalized form. Also, M. Mitzenmacher \cite{LBF} propose sandwich structured LBF using a combination of machine learning with Bloom Filter. This structure saves time and space of a system. 

\subsection{Malicious URL}
Feng et al. \cite{Feng} use Bloom Filter to filter malicious URL. In their work, they have used multi-layer counting Bloom Filter (MCBF) for caching the malignant and benign URLs. However, deletion operation is merely used malicious URL detection. Deletion operation causes false negatives. Therefore, conventional Bloom Filter avoids deletion operation to get rid of the false negative issue.  Counting Bloom Filter (CBF) is a nearly false negative free. But, it may also occur when the counter underflows. Moreover, CBF uses higher memory footprint than conventional Bloom Filter. Dai and Shrivastava \cite{Dai} propose malicious a URL detection mechanism with M. Mitzenmacher's LBF, called Ada-BF and disjoint Ada-BF. Ada-BF is based on M. Mitzenmacher and grouping the keys to be hashed into the Bloom Filter. Based on the score, Ada-BF hashes the keys into different group in the Bloom Filter. Disjoint Ada-BF, also group keys based on score, however, the Bloom Filters are also independent, i.e., disjoint Ada-BF creates several Bloom Filters and insert the keys into a particular Bloom Filter based on the score. Both Ada-BF and disjoint Ada-BF may have skewed load. For instance, a few groups are overloaded and rest groups are under-loaded. This may happen in real life scenarios. Gerbet \textit{et al.} \cite{Gerbet} argues that non-cryptography hash functions are more vulnerable to cryptography string hash functions in Bloom Filter. We argue that this is not true for Bloom Filter. If non-cryptography hash functions are vulnerable, then cryptography hash functions are. Bloom Filter reduces hashes the keys using hash function and places the keys by modulus operations. Good string hash function may not improve the performance and FPP of Bloom Filter. Inversely, introducing more biases in the string hash function can increase the performance and reduce the FPP. On the contrary, if we use SHA or MD5, then false positive may increase and performance may also be affected adversely.  

\subsection{Evolutionary convolutional Neural Network}
Deep learning models are immensely used for numerous classification problems in different domains and proven to be superior over feature-based machine learning techniques \cite{BP}. However, the success of any deep learning model is dependent on several factors like tuning of appropriate different hyper-parameters, neural network architecture, optimizer, etc. To learn neural network weights, gradient-based optimizer such as stochastic gradient descent, min-batch gradient descent, and the Adam optimizer are widely used. However, the architecture of neural network and hyper-parameters are have to be tuned manually for better performance of the model. evoCNN models are gaining attention in recent years to overcome the manual tuning of hyper-parameters and the network architecture, (refer to detailed survey~\cite{darwish2020survey}). Currently, Several evoCNN models have been developed, mainly based on nature-inspired evolutionary optimization techniques such as Genetic Algorithm (GA), Particle Swarm Optimization (PSO), and Ant Colony Optimization (ACO). The work of Miller \textit{et al.}~\cite{miller1989designing} in 1989 was probably the first such model, which considered GA to design simple neural network. They had considered simple binary representation of neural network elements like neural units, connections, and biases etc.  Angeline \textit{et al.} ~\cite{angeline1994evolutionary} developed GA based model to construct recurrent networks. The foundation for the modern evoCNN model using GA has been laid down by Stanley and Miikkulainen~\cite{stanley2002evolving}, which learns both structure and weighting parameters of the neural network. The neural evolution follows simple feed-forward learning and mainly does three things: crossover between topologies, tracking the evolutionary units and update the topologies. Leung \textit{et al.}~\cite{leung2003tuning} proposed another model with an improved GA to further optimize the network structure considering learning of the input–output relationship. Gascón-Moreno \textit{et al.}~\cite{gascon2013evolutionary}  proposed hyperheuristic approach to adjust the number of nodes defined in each layer of the network, the number of layers, and the polynomial type. Recently,  Sun \textit{et al.}~\cite{sun2020automatically} have developed evolving deep convolutional   neural network (CNN) model using GA for evolving the architectures and connection weight initialization values to effectively address the image classification tasks.

\section{deepBF- The proposed system}
\label{PS}

We present a novel malicious URL detection mechanism, called deepBF. deepBF uses 2-dimensional Bloom Filter (2D Bloom Filter) to implement self-adjusted Bloom Filter using machine learning techniques \cite{LBF}. It deploys evolutionary deep learning technique to identify the malicious URLs. Our proposed system maintains two self-adjusted Bloom Filter, called $\mu\mathbb{BF}$ and $\beta\mathbb{BF}$ for storing malignant and benign URLs respectively. Initially, URL $\psi$ is queried to $\mu\mathbb{BF}$ and $\beta\mathbb{BF}$ to know whether $\psi$ is malignant or benign. If both filters response negative, then the URL $\psi$ is a new URL. Therefore, $\psi$ is input to evolutionary convolutional neural networks ($\epsilon CNN$) for classification. If $\epsilon CNN$ mark it as benign, then the URL $\psi$ is inserted into $\beta\mathbb{BF}$ and allow it for further processing. Otherwise, the URL $\psi$ is inserted into $\mu\mathbb{BF}$ and blocks the URL $\psi$ from further processing.

The proposed system is described in three phases; particularly, a) architecture of 2D Bloom Filter and it enhancement process, b) making 2D Bloom Filter as self-adjusted Bloom Filter, and c) the final outcome as deepBF with malicious URL detection.

\begin{figure}[!ht]
\centering
\includegraphics[width=0.45\textwidth]{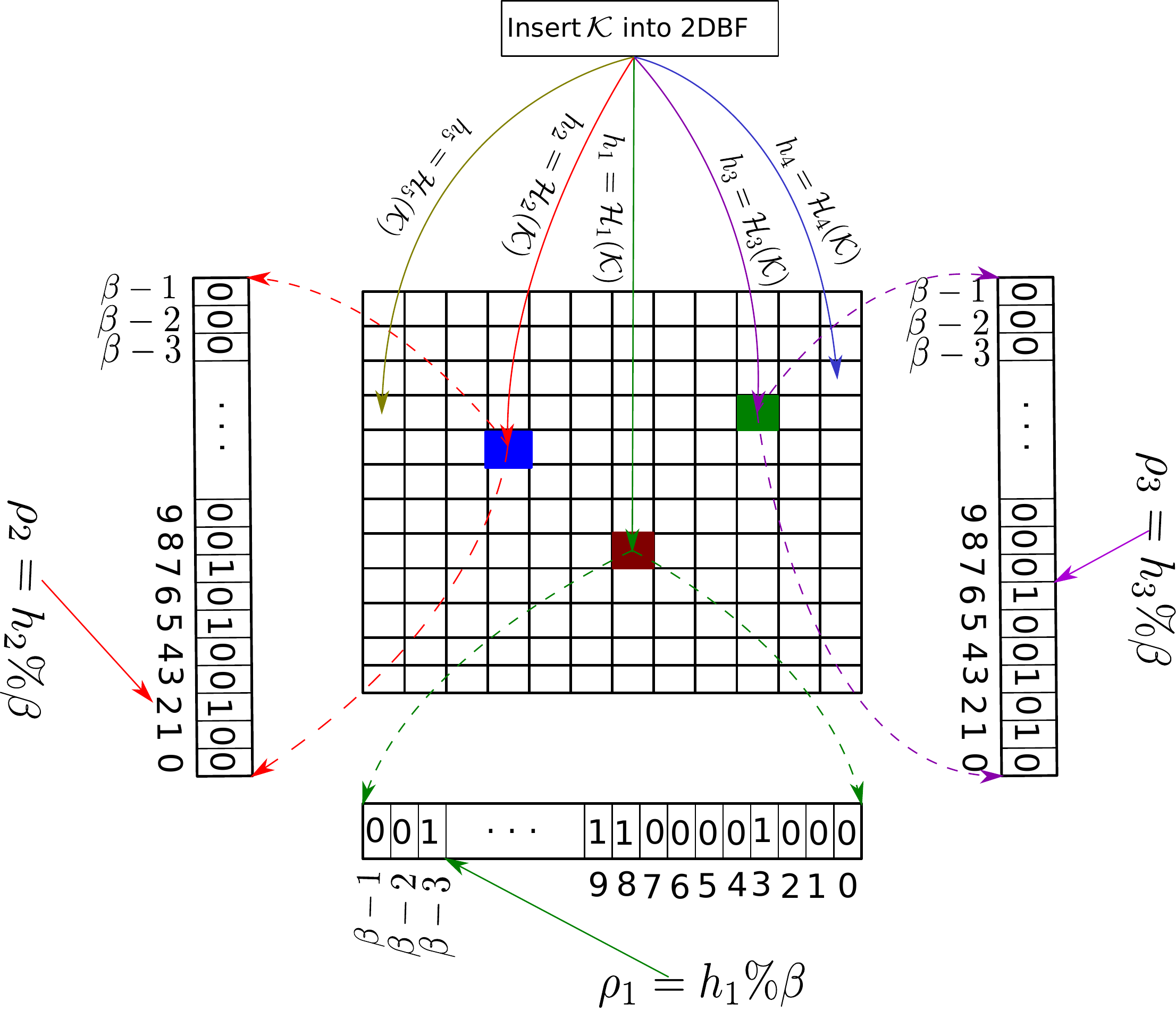}
\caption{Architecture self-adjusting Bloom Filter of deepBF depicting with five hash functions. The five hash functions are invoked for 10M items.}
\label{ins}
\end{figure}

\subsection{Insertion}
An item is inserted into self-adjusting Bloom Filter of deepBF as depicted in Figure~\ref{ins}. Algorithm \ref{algo1} implements the insertion process of self-adjusted Bloom Filter in deepBF where a set of input items is inserted into self-adjusting Bloom Filter.

\begin{algorithm}[H]
\caption{Self-adjusted Bloom Filter (2D Bloom Filter) insertion algorithm in deepBF}\label{algo1}
\footnotesize
\begin{algorithmic}[1]
\Procedure{Insertion}{$2D Bloom Filter,~File$}
    \While{$\mathcal{K}\gets ~Read~input~from~File$} 
        \State $h_1=\mathcal{H}(\mathcal{K},Seed_1)$
        \State $h_2=\mathcal{H}(\mathcal{K},Seed_2)$
        \State $h_3=\mathcal{H}(\mathcal{K},Seed_3)$
        \State $h_4=\mathcal{H}(\mathcal{K},Seed_4)$
        \State $h_5=\mathcal{H}(\mathcal{K},Seed_5)$
        \State $\Call{Insert2D Bloom Filter}{\mathcal{K},h_1}$
        \State $\Call{Insert2D Bloom Filter}{\mathcal{K},h_2}$ 
        \State $\Call{Insert2D Bloom Filter}{\mathcal{K},h_3}$ 
        \State $\Call{Insert2D Bloom Filter}{\mathcal{K},h_4}$ 
        \State $\Call{Insert2D Bloom Filter}{\mathcal{K},h_5}$
    \EndWhile
\EndProcedure
\end{algorithmic}
\end{algorithm}

deepBF uses self-adjusted Bloom Filter which is implemented using 2D Bloom Filter. Moreover, 2D Bloom Filter uses three modulus operations to place an item in a particular bit position. Let us assume, $\mathbb{B}_{M,N}$ be a 2-dimensional \textbf{unsigned long int} array to implement Bloom Filter which is initialized by zero and assuming \textbf{unsigned long int} occupies $64$ bits. The $M\not=N$ are the dimensions of the Bloom Filter and both are prime number. Equation \eqref{eq4} gives $m$, the number of memory required for $n$ items. We maintain a prime number array and the index is calculated for finding the value of $M$ and $N$. Let, $P=\{p_1,p_2,p_3,\ldots \}$ be the array of prime numbers and $i\longleftarrow \sqrt{m}$. The two dimensions are assigned by $M\longleftarrow P_{i-1}$ and $N\longleftarrow P_{i+1}$ where $i$ is a index. It is observed that the distance between two prime numbers is an important factor. It reduces the false positive rate, because the distance between $P_{i-3}$ and $P_{i+3}$ are more than the distance between $P_{i-1}$ and $P_{i+1}$. 2D Bloom Filter also requires three parameters to set a bit in $\mathbb{B}_{M,N}$, namely, $i,~j$, and $\rho$ where $\rho$ is the bit position of a particular cell, say, $\mathbb{B}_{i,j}$. The $i$ and $j$ represent particular row and column respectively. The cell size of $\mathbb{B}_{i,j}$ depends on the memory occupied by the filter for each cell, termed as $\beta$, for example, $64~bits$ for \textbf{unsigned long int}. Now, 2D Bloom Filter sets a bit in $\mathbb{B}_{i,j}$ to insert item $\mathcal{K}$  by invoking Equation \eqref{eq5}. 
\begin{equation}\label{eq5}
\mathbb{B}_{i,j}\leftarrow \mathbb{B}_{i,j}~OR~(1\ll\rho)
\end{equation}
where $OR$ is a bitwise operator and $\ll$ is the bitwise left shift operator. Now, the Murmur hash functions $\mathcal{H}(\mathcal{K})$ returns a value and assigned the returned value to $h$ by $h\leftarrow\mathcal{H}(\mathcal{K})$. To place $\mathcal{K}$, 2D Bloom Filter calculates the parameters as follows: row $i\leftarrow h\% M$, column $j \leftarrow h\%N$, and bit position $\rho\leftarrow h\% \beta$, where $\%$ is a modulus operator and $\beta$ is the bit size per cell of the Bloom Filter array. Thus, $\mathcal{K}$ is inserted using the Equation \eqref{eq5}. It is observed that $\beta=61$ have less the false positive probability than $\beta=63$ or $\beta=64$. Moreover, the number of hash functions plays critical role in reducing the false positive probability. The optimized value of number of hash functions, $\lambda$, is calculated as $\lambda=\frac{m}{n}ln2$. In our proposed systems, 2D Bloom Filter calculates the number of hash functions for achieving desired false positive probability. Therefore, 2D Bloom Filter requires $\lambda=\lceil \frac{\lambda}{2} \rceil$ hash function calls.

\subsection{Membership Query}
Similar to insertion operation, all parameters ($i,~j$ and $\rho$) are calculated for lookup operation. Equation \eqref{eq6} is invoked to query whether the item $\mathcal{K}$ is a member of 2D Bloom Filter or not.  
\begin{equation}\label{eq6}
Flag_1\leftarrow (\mathbb{B}_{i,j}~AND~(1\ll\rho))\gg \rho
\end{equation}
where $AND$ is a bitwise operator. If $Flag_1=0$, then $\mathcal{K}$ is not a member of 2D Bloom Filter.

\begin{algorithm}
\caption{2D Bloom Filter membership query of deepBF}\label{algo2} 
\footnotesize
\begin{algorithmic}[1]
\Procedure{Insertion}{$2D Bloom Filter,~File$}
    \While{$\mathcal{K}\gets ~Read~input~from~File$}
        \State $h_1=\mathcal{H}(\mathcal{K},Seed_1)$
        \State $h_2=\mathcal{H}(\mathcal{K},Seed_2)$
        \State $h_3=\mathcal{H}(\mathcal{K},Seed_3)$
        \State $h_4=\mathcal{H}(\mathcal{K},Seed_4)$
        \State $h_5=\mathcal{H}(\mathcal{K},Seed_5)$
        \If{$\Call{queryMember2D Bloom Filter}{\mathcal{K},h_1}=true$} 
            \If{$\Call{queryMember2D Bloom Filter}{\mathcal{K},h_2}=true$} 
                \If{$\Call{queryMember2D Bloom Filter}{\mathcal{K},h_3}=true$} 
                   \If{$\Call{queryMember3DBF}{\mathcal{K},h_4}=true$} 
                        \If{$\Call{queryMember2D Bloom Filterr}{\mathcal{K},h_5}=true$} 
                            \State $Found\gets Found+1$
                         \EndIf
                    \EndIf
                \EndIf
            \EndIf
        \EndIf
    \EndWhile
\EndProcedure
\end{algorithmic}
\end{algorithm}

\subsection{2D Bloom Filter as self-adjusted Bloom Filter}
Bloom Filter does not understand patterns. However, it can be trained to learn about the patterns using Machine Learning techniques. Similar to the concept of M. Mitzenmacher \cite{LBF}, we deploy evolutionary convolutional neural networks to identify the patterns and train deepBF. deepBF is deployed in malicious URL detection which is much faster than lookup in any machine learning techniques. Because, it combines both Bloom Filter and evolutionary convolutional neural networks to improve overall performance of identifying pattern. Self-adjusted Bloom Filter continuously learns about the patterns after deploying it in real project using the evolutionary convolutional neural networks. 

\begin{definition}\label{def5}
Let $\mathrm{P}$ be a pattern, and $\mathbb{B}$ is the Bloom Filter. If $\mathbb{B}$ can identify the pattern $\mathrm{P}$, then $\mathbb{B}$ is called learned Bloom Filter.
\end{definition}

Definition \ref{def5} defines the learned Bloom Filter, coined by M. Mitzenmacher \cite{LBF}. Notably, Bloom Filter does not understand the patterns. Therefore, a machine learning algorithm is required to assist the identification of patterns by the Bloom Filter. Therefore, deepBF can provide fast identification of patterns using Bloom Filter and Deep Learning method. In our proposed system, we consider Malicious URL detection as case study to validate the veracity. But deepBF can be deployed diverse applications, for instance, DDoS. As we know that Bloom Filter plays important role in the malicious URL detection. The machine learning algorithms are time consuming as compared to Bloom Filter. Moreover, the loads on a tiny device can be reduced by Bloom Filter. Also, machine learning algorithms require more memory than Bloom Filter. Therefore, Bloom Filter acts as the first layer of filtering process to reduce the load on the machine learning process. We propose a self-adjusted Bloom Filter which uses 2D Bloom Filter in deepBF. There are two situation in of 2D Bloom Filter in our proposed system; particularly, a) trained the 2D Bloom Filter before deploying it, or b) deploy 2D Bloom Filter without training it. In the both cases, deepBF works. We know that the learned Bloom Filter is trained before deploying it in a real environment. However, 2D Bloom Filter does not require any training in deepBF but it can also be trained before deploying it on real-life. Moreover, 2D Bloom Filter can be self-adjusted throughout the life-cycle which is demonstrated in Figure \ref{mal}. Therefore, our proposed 2D Bloom Filter is termed as self-adjusted Bloom Filter. Noteworthy that the $\epsilon CNN$ requires training before deploying it in real environment.

\subsection{Malicious URL Detection}
\begin{figure}[!ht]
    \centering
    \includegraphics[width=0.45\textwidth]{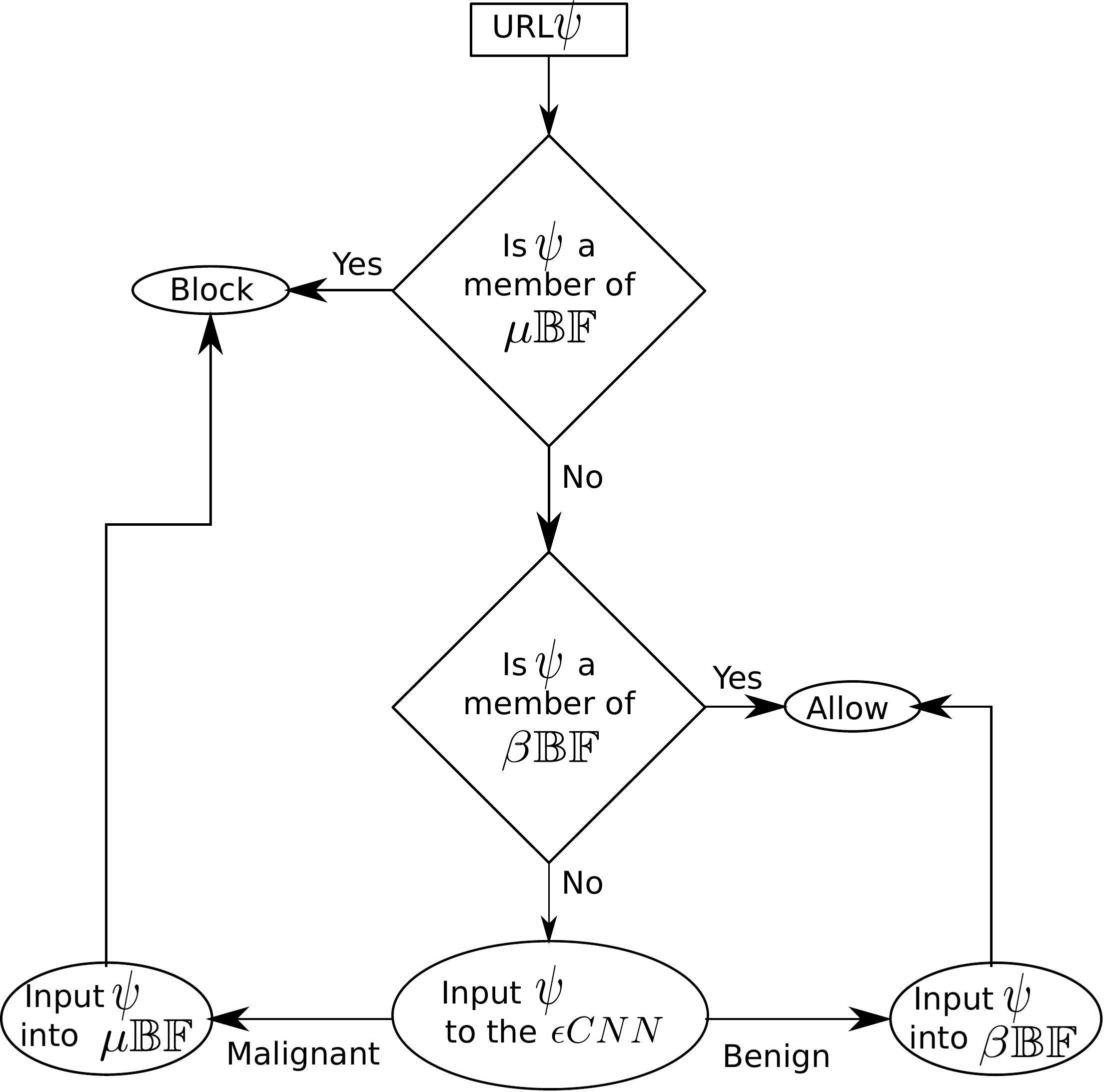}
    \caption{Malicious URL detection using two self-adjusted Bloom Filters, namely, $\mu\mathrm{BF}$ and $\beta\mathrm{BF}$ for malignant and benign URLs respectively.}
    \label{mal}
\end{figure}

Let, $\psi$ be the unknown URL, $\mu\mathbb{BF}$ and $\beta\mathbb{BF}$ be the self-adjusted Bloom Filter for malignant and benign URLs, respectively. Let $\epsilon CNN$ be the evolutionary convolutional deep learning. Figure \ref{mal} demonstrates the flow of an URL $\psi$. Firstly, $\psi$ is queried to $\mu\mathbb{BF}$ to know whether the $\psi$ is malignant or not. If $\psi$ is a member of $\mu\mathbb{BF}$, then the URL $\psi$ is blocked. Otherwise, $\psi$ is queried to $\beta\mathbb{BF}$. If $\psi$ is a member of $\beta\mathbb{BF}$, then the URL $\psi$ is benign and it is allowed; otherwise, $\psi$ is a new URL. This new URL is input into $\epsilon CNN$ for pattern recognition. The outcome of $\epsilon CNN$ is either malignant or benign. If the $\psi$ is malignant, then insert $\psi$ into  $\mu\mathbb{BF}$ and block $\psi$. Otherwise, it is inserted into $\beta\mathbb{BF}$ and the  $\psi$ is allowed. If blocked URL $\psi$ is queried for the next time, then it does not require to input into the $\epsilon CNN$ because the self-adjusted Bloom Filter blocks the URL for further processing. It saves times of the checking whether the input item is benign or malignant. Therefore, the two self-adjusted Bloom Filters grow their inputs over a time period which are much faster than the machine learning algorithms. Over a time period, only the new URLs are passed to $\epsilon CNN$ which are very less as compared to the beginning of the life-cycle of the project.


\section{Experimental Results}
\label{ER}
To evaluate our proposed system, we conduct a series of rigorous test in the low cost desktop environment. The configuration of the system is Intel(R) Core(TM) i7-7700 CPU @ 3.60GHz, Ubuntu 18.04.4 LTS with 8GiB RAM. The experimental environment is depicted in Table \ref{tab}.
\begin{table}[!ht]
    \centering
    \caption{\footnotesize{Experimental Environment Setup}}
    \begin{tabular}{p{2cm}p{4cm}}
    \hline
    \textbf{Name} & \textbf{Description}
    \\
    \hline
      CPU   &  Intel(R) Core(TM) i7-7700 CPU @ 3.60GHz\\
      L1 Cache & 32K\\
      L2 Cache & 256K\\
      RAM  &  8GB\\
      HDD & 500GB\\
      GPU & Intel® HD Graphics 630 (KBL GT2)\\
      Operating System & Ubuntu 18.04.1 LTS 64-bits\\
      \hline
    \end{tabular}
    \label{tab}
\end{table}

We present the experimental results as follows- a) selection of suitable hash function for 2D Bloom Filter, b) comparing 2D Bloom Filter with other state-of-the-art Bloom Filters, c) training and testing evolutionary convolutional Neural Network, and d) the final results of deepBF with combining 2D Bloom Filter and evolutionary convolutional Neural Network as shown in Figure \ref{mal}. 

\subsection{Test cases}
In this experimentation, we have created three different test cases to evaluate the Bloom Filter's performance. We have created three datasets, particularly, same set, mixed set and disjoint set which are defined in Definitions \ref{def7}, \ref{def8} and \ref{def9}. The size of three datasets is 10 million (10M). Initially, 10M unique keys are inserted into 2D Bloom Filter which takes $8.999744$ seconds. The same inserted keys are queried into 2D Bloom Filter which is termed as same set. The mixed set is also a unique set of items, but 50\% of query dataset items match with inserted dataset which is termed as mixed set. In disjoint set, query dataset does not match with inserted dataset. The disjoint set is a set of random keys. These test cases are used to validate the veracity of the 2D Bloom Filter in every aspect. The test cases are designed such that it can work in any kind of dataset in real environment. Most of the cases, the data are repetitive in nature; for instance, URL data. Therefore, these three test cases are enough to verify and validate the performance of a Bloom Filter in every aspect. If Bloom Filter passes these three test cases with low false positive probability, then it can withstand any kind of situation.

Interestingly, Figure \ref{hash1}  demonstrates the time measurement of 2D Bloom Filter in the three use cases. The insertion and query times are almost same for same set, however, query operation takes more times than insertion operation as shown in Figure \ref{hash2}, but the insertion operation takes more times as compared to the mixed set and disjoint set. The total false positives count is reported in Figure \ref{hash4}. 

Let, $\mathcal{S}=\{s_1,s_2,s_3,\ldots,s_m\}$ input set and input into the 2D Bloom Filter.
\begin{definition}\label{def7}
Let, $\mathcal{Q}$ is a set queried where $\mathcal{Q}=\mathcal{S}$, then the set $\mathcal{Q}$ is called same set.
\end{definition}

\begin{definition}\label{def8}
Let, $\mathcal{Q}=\{q_1,q_2\}$ be a query set where $q_1\subset\mathcal{S}$ and $q_2\cap\mathcal{S}=\phi$, then, the set $\mathcal{Q}$ is called mixed set.
\end{definition}

\begin{definition}\label{def9}
Let, $\mathcal{Q}$ be a query set where $\mathcal{Q}\cap\mathcal{S}=\phi$, then, the set $\mathcal{Q}$ is called disjoint set.
\end{definition}

\begin{definition}\label{def10}
Let, $\mathcal{Q}$ be a query set of randomly generated strings or keys, then, the set $\mathcal{Q}$ is called random set.
\end{definition}

The test cases (Definition \ref{def7}, \ref{def8}, \ref{def9} and \ref{def10}) are created to identify the strength and weakness of a Bloom Filter. The Bloom Filter does not exhibit same behavior in different test cases. Moreover, these test cases help us to evaluate the performance of the filters. We expose the strength and weakness of the filters through these test cases.

\subsection{Settings of the filters}
The required settings of the filter is $m$, $n$, $\lambda$ and $\varepsilon$. In our experiments, the desired false positive probability is $\varepsilon=0.001$ for all. From the $\varepsilon$ and $n$, the total required memory is calculated as shown in Equation \eqref{eq4}. Also, $\lambda$ can be calculated from $m$ and $n$ as shown in Equation \eqref{eq3}.

\subsection{Selection of Hash Function}
To select the best hash function for deepBF, we have conducted an extensive experiment to observe the behavior of the hash functions. We have considered eight hash functions to test the performances and accuracy, namely, FNV1, FNV1a, CRC32, Murmur2, SuperFastHash and xxHash. 2D Bloom Filter implements these hash functions to execute the insertion and query operations in 2D Bloom Filter. The best hash function is selected based on the performance of 2D Bloom Filter. The criteria for selecting the hash function to deploy in deepBF is outlined below- \newline
$\bullet$ Takes the least amount of time to process the query and insertion operation. \newline
$\bullet$ Gives high accuracy, i.e., low false positives.

\begin{definition}\label{def11}
Million operation per second (MOPS) is standard in comparison of Bloom Filter performance. It is calculated as $MOPS=\frac{n}{\tau\times 1000000}$ where $n$ is the number of items and $\tau$ is the total time taken to process the $n$ items.
\end{definition}

\pgfplotstableread[row sep=\\,col sep=&]{
interval&   Time&   	MOPS\\
MMurmur&	    1.784254&	5.60458320396087\\
Murmur2&	3.905765&	2.56031788906911\\
SuperFastHash&	3.95813&	2.52644556899344\\
xxHash& 	4.535797&  	2.2046842043416\\
CRC32&  	4.503309&	2.22058934885436\\
FastHash&	4.15934&	2.40422759380094\\
FNV1&   	4.315478&	2.31724040766747\\
FNV1a&  	4.413256&	2.2659007317953\\
}\ins
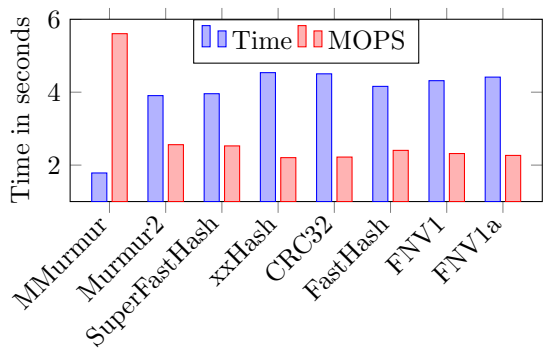
\begin{figure}[!ht]
\centering
\begin{tikzpicture}
    \begin{axis}[
            ybar,
            bar width=.2cm,
            width=0.49\textwidth,
            height=0.25\textwidth,
            legend style={at={(0.5,1)},
                anchor=north,legend columns=2,legend cell align=left},
            symbolic x coords={MMurmur,Murmur2,SuperFastHash,xxHash,CRC32,FastHash,FNV1,FNV1a},
            xtick=data,
            nodes near coords align={vertical},
            ymin=1,ymax=6,
            ylabel={Time in seconds},
            x tick label style={rotate=45,anchor=east},
        ]
        \addplot table[x=interval,y=Time]{\ins};
        \addplot table[x=interval,y=MOPS]{\ins};
        \legend{Time,MOPS}
    \end{axis}
\end{tikzpicture}
\caption{Time taken in insertion process of 10M keys into 2D Bloom Filter using various non-cryptographic string hash functions. Lower is better for Time and Higher is better for million operations per second (MOPS, Definition \ref{def10}).}
\label{hash1}
\end{figure}

The different test cases are created to evaluate the non-cryptography string hash function in 2D Bloom Filter platform. The test cases are defined in Definitions \ref{def7}, \ref{def8}, \ref{def9} and \ref{def10}. The non-cryptography hash functions are Murmur, Murmur2, SuperFastHash, xxHash, CRC32, FastHash, FNV1 and FNV1a. We have introduced more biased in Murmur2 to achieve higher speed and lower false positive probability. The modified Murmur hash function is termed as MMurmur for short. Figure \ref{hash1} depicts the insertion performance of all eight hash functions in 2D Bloom Filter platform. MMurmur with high biases is faster than rest hash functions in insertion of 10Million (10M) unique keys. MMurmur hash function is a modification and replacement of the costly operators with low-cost operators, for instance, the bitwise operators are faster than other operators. Also, number of operations are reduced. Thus, the MMurmur hash function is able to achieve higher performance than other hash functions. 

\pgfplotstableread[row sep=\\,col sep=&]{
interval&	MMurmur&	Murmur2&	SuperFastHash&	xxHash&	 CRC32&	 FastHash&	FNV1&	FNV1a\\
Same set&	1.822254&	3.600027&	3.64156&	4.235233&	4.120699&	3.838001&	4.039594&	4.136836\\
Mixed set&	1.345131&	3.148486&	3.122611&	3.802952&	3.66289&	3.201998&	3.754795&	3.702974\\
Disjoint set&	0.982423&	2.895642&	3.045309&	3.700426&	3.611164&	2.992245&	4.073009&	3.783896\\
Random set&	0.993676&	2.871343&	3.030517&	3.672954&	3.595291&	2.960768&	3.779978&	3.746942\\
}\mydata
\begin{figure}[!ht]
\centering
\begin{tikzpicture}
    \begin{axis}[
            ybar,
            bar width=.1cm,
            width=0.49\textwidth,
            height=0.2\textwidth,
            enlarge x limits=0.2,
            legend style={at={(0.5,1)},
                anchor=south,legend columns=3,legend cell align=left},
            symbolic x coords={Same set,Mixed set,Disjoint set,Random set},
            xtick=data,
            ymin=0.8,ymax=6.5,
            ylabel={Time in seconds},
             x tick label style={rotate=45,anchor=east},
        ]
        \addplot table[x=interval,y=MMurmur]{\mydata};
        \addplot table[x=interval,y=Murmur2]{\mydata};
        \addplot table[x=interval,y=SuperFastHash]{\mydata};
        \addplot table[x=interval,y=xxHash]{\mydata};
        \addplot table[x=interval,y=CRC32]{\mydata};
	    \addplot table[x=interval,y=FastHash]{\mydata};
        \addplot table[x=interval,y=FNV1]{\mydata};
	    \addplot table[x=interval,y=FNV1a]{\mydata};
        \legend{MMurmur, Murmur2, SuperFastHash, xxHash, FastHash, CRC32, FNV1, FNV1a}
    \end{axis}
\end{tikzpicture}
\caption{Time taken in lookup of 10M keys of different use cases in 2D Bloom Filter using various non-cryptographic string hash functions. Lower is better.}
\label{hash2}
\end{figure}
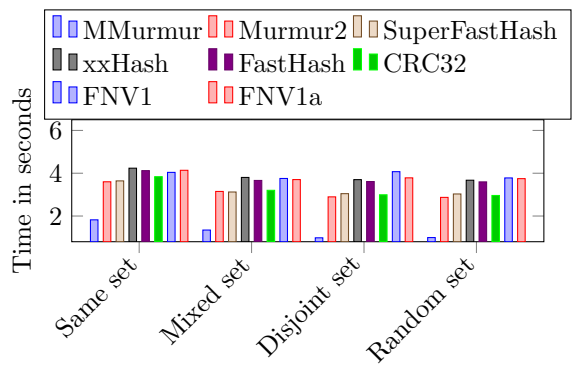

Insertion operation of Bloom Filter is not as important as lookup operation. Lookup operation is crucial in Bloom Filter because insertion operations are rare, but lookup operations are more frequent. Therefore, it is important to improve the performance of lookup operations. Figure \ref{hash2} demonstrates the performance of non-cryptography string hash function in 2D Bloom Filter platform. MMurmur hash function is at least $1.98\times$, $2.32\times$,  $2.95\times$ and $2.89\times$ faster than the other hash functions in the same set, mixed set, disjoint set and the random set respectively. Alternatively, MMurmur hash function improves at least $49.38\%$ compared to other hash functions.

\pgfplotstableread[row sep=\\,col sep=&]{
interval& 	MMurmur&	Murmur2&	SuperFastHash&	xxHash&	CRC32&	FastHash&	FNV1&	FNV1a\\
Same set&	5.48770917775458&	2.77775694460069&	2.74607585759949&	2.36114518374786&	2.42677273928525&	2.60552303139056&	2.47549629987568&	2.41730636650812\\
Mixed set&	7.43422016145639&	3.17612973346554&	3.20244820760575&	2.62953621292091&	2.73008471452869&	3.12305004562776&	2.66326124329025&	2.70053205882623\\
Disjoint set&	10.1789147851791&	3.45346558725146&	3.28373902287091&	2.70239156248497&	2.76919021124491&	3.34197233181106&	2.45518730746728&	2.64277876558975\\
Random set&	10.063642475012&	3.48269085232938&	3.29976700345189&	2.72260420359199&	2.78141602446089&	3.37750205352125&	2.6455180426976&	2.66884301918738\\
}\data
\begin{figure}[!ht]
\centering
\begin{tikzpicture}
    \begin{axis}[
            ybar,
            bar width=.1cm,
            width=0.49\textwidth,
            height=0.25\textwidth,
            enlarge x limits=0.2,
            legend style={at={(0.5,1)},
                anchor=south,legend columns=3,legend cell align=left},
            symbolic x coords={Same set,Mixed set,Disjoint set,Random set},
            xtick=data,
            nodes near coords align={vertical},
            ymin=2,ymax=14,
            ylabel={MOPS},
             x tick label style={rotate=45,anchor=east},
        ]
        \addplot table[x=interval,y=MMurmur]{\data};
        \addplot table[x=interval,y=Murmur2]{\data};
        \addplot table[x=interval,y=SuperFastHash]{\data};
        \addplot table[x=interval,y=xxHash]{\data};
        \addplot table[x=interval,y=CRC32]{\data};
	    \addplot table[x=interval,y=FastHash]{\data};
        \addplot table[x=interval,y=FNV1]{\data};
	    \addplot table[x=interval,y=FNV1a]{\data};
        \legend{MMurmur,Murmur2, SuperFastHash, xxHash, FastHash, CRC32, FNV1, FNV1a}
    \end{axis}
\end{tikzpicture}
\caption{Million Operations Per Second (MOPS) in lookup of 10M keys of different use cases in 2D Bloom Filter using various non-cryptography string hash functions. Higher is better.}
\label{hash3}
\end{figure}
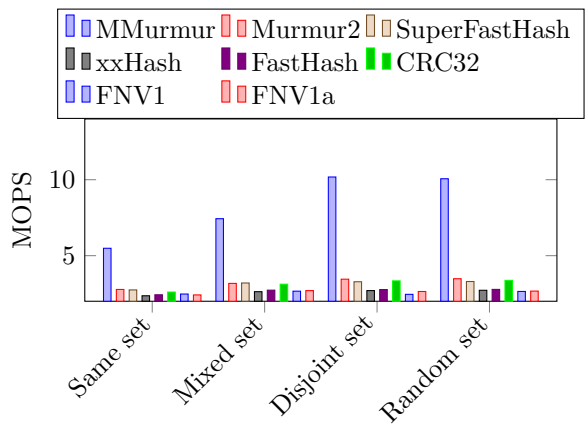

Figure \ref{hash3} illustrates performance in MOPS. MMurmur hash function outperforms all hash functions in 2D Bloom Filter platform. MMurmur hash function performs $5.48$ MOPS, $7.43$ MOPS, $10.18$ MOPS, $10.06$ MOPS in low-cost hardware for same set, mixed set, disjoint set and random set respectively. However, other hash functions perform lower MOPS than MMurmur hash function.

\pgfplotstableread[row sep=\\,col sep=&]{
interval&	MMurmur&	Murmur2&	SuperFastHash&	xxHash&	CRC32&	FastHash&	FNV1&	FNV1a\\
Mixed set&	0.00034&	0.032231&	0.025979&	0.032351&	0.029467&	0.032197&	0.031877&	0.033157\\
Disjoint set&	0.000037&	0.032284&	0.011761&	0.032234&	0.030469&	0.032193&	0.032341&	0.032163\\
Random set&	0.000036&	0.032063&	0.011655&	0.032169&	0.030415&	0.031977&	0.032032&	0.032205\\
}\fpp
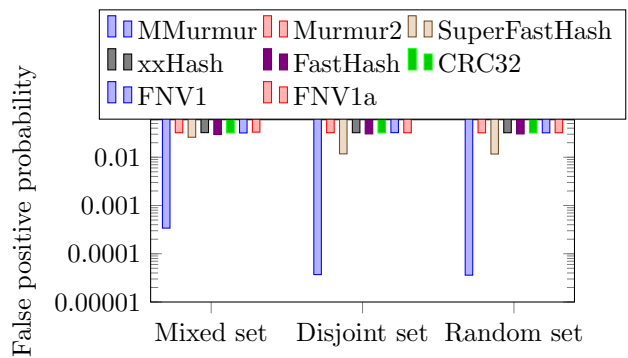
\begin{figure}[!ht]
\centering
\begin{tikzpicture}
    \begin{axis}[
            ybar,
            bar width=.1cm,
            width=0.45\textwidth,
            height=.25\textwidth,
            enlarge x limits=0.2,
            ymode=log,
            log ticks with fixed point,
            legend style={at={(0.5,1)},
                anchor=south,legend columns=3,legend cell align=left},
            symbolic x coords={Mixed set,Disjoint set,Random set},
            xtick=data,
            ymin=0.00001,ymax=0.06,
            ylabel={False positive probability},
        ]
        \addplot table[x=interval,y=MMurmur]{\fpp};
        \addplot table[x=interval,y=Murmur2]{\fpp};
        \addplot table[x=interval,y=SuperFastHash]{\fpp};
        \addplot table[x=interval,y=xxHash]{\fpp};
        \addplot table[x=interval,y=CRC32]{\fpp};
	    \addplot table[x=interval,y=FastHash]{\fpp};
        \addplot table[x=interval,y=FNV1]{\fpp};
	    \addplot table[x=interval,y=FNV1a]{\fpp};
        \legend{MMurmur,Murmur2, SuperFastHash, xxHash, FastHash, CRC32, FNV1, FNV1a}
    \end{axis}
\end{tikzpicture}
\caption{False positive probability of lookup of 10M keys of different use cases in 2D Bloom Filter using various non-cryptography string hash functions. Lower is better.}
\label{hash4}
\end{figure}

Finally, the utmost crucial factor of Bloom Filter is false positive probability and it directly proportionate to the accuracy.  Hence, Bloom Filter requires higher accuracy within desired false positive probability. The false positive probability depends on memory and the number of hash functions. Bloom Filter should not take more memory and hash functions. The number of hash function calls, reduce lookup and insertion performances. Moreover, Bloom Filter is used due to its lower memory footprint. Therefore, 2D Bloom Filter is measured in $0.001$ desired false positive probability which directly translates to $10$ hash functions calls and $17.14~MB$ primary memory consumption for $10M$ keys. However, 2D Bloom Filter allocates $17.36~MB$. Therefore, the MMurmur hash function is measured in the above mentioned settings. Notably, the false positive probability is lower than the desired false positive probability with the same settings. For all hash functions, there are no false positives for the same set. However, there are false positive probability in mixed set, disjoint set and random set. All hash functions exhibit similar false positive probability except the MMurmur hash function. MMurmur hash function exhibits extremely low false positive probability as compared to other hash functions which is depicted in Figure \ref{hash4}.

\subsection{Comparison with other filters}
With the same settings, 2D Bloom Filter is compared with other Filters, i.e., the desired false positive probability is $0.001$, the number of hash functions is $10$, the memory requirement is $17.14~MB$ or equivalent and the total $10~M$ unique keys are inserted. This article compares and demonstrates that 2D Bloom Filter with other filters that uses MMurmur hash function. 2D Bloom Filter uses five hash functions which is half of the conventional Bloom Filter.

\begin{table}[!ht]
    \centering
    \begin{tabular}{|p{2.5cm}|p{4cm}|}
    \hline
      \textbf{Bloom Filter}   & \textbf{Memory in MB} \\ \hline
       2D Bloom Filter  & 17.37 \\ \hline
       CF & 24 \\ \hline
       Kirsch \textit{et al.} & 17.14 \\ \hline
       CBF & 68.56\\ \hline
    \end{tabular}
    \caption{Memory used for $10M$ keys to achieve desired false positive probability of 0.001 by 2D Bloom Filter, CF, Kirsch \textit{et al.}, and CBF.}
    \label{tab1}
\end{table}

Table \ref{tab1} provides the total memory requirements of the filters. 2D Bloom Filter is compared with Cuckoo Filter (CF) \cite{Cuckoo,cf_src}, Kirsch \textit{et al.} \cite{Kirsch}, and counting Bloom Filter (CBF) \cite{countingBF,dbloom}. 2D Bloom Filter, CF, Kirsch, and CBF take $17.37 ~MB$, $24~MB$, $17.14~MB$ and $68.56~MB$ of memory respectively. The CBF takes higher memory than other Bloom Filters, i.e., CBF has higher false positive probability than any other Filters to achieve a desired false positive probability. If CBF or CF uses 17.14 MB memory, then both have a higher false positive probability. Alternatively, Kirsch \textit{et al.} and 2D Bloom Filter has higher accuracy.

\pgfplotstableread[row sep=\\,col sep=&]{
interval&  Time & MOPS \\
2D Bloom Filter&	1.784254& 5.60458320396087 \\
CF&         1.16536&  8.58103933548431\\
Kirsch& 	4.217597 & 2.37101837847476 \\
CBF&	    5.166789&  1.93543804478952\\
}\inss
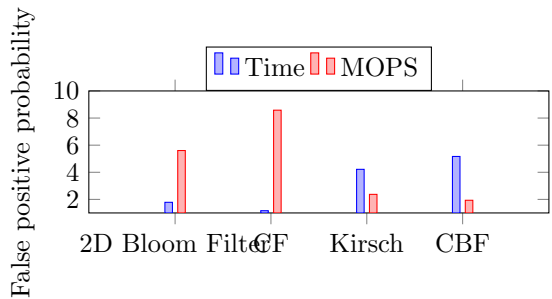
\begin{figure}[!ht]
\centering
\begin{tikzpicture}
    \begin{axis}[
            ybar,
            bar width=.1cm,
            width=0.48\textwidth,
            height=.2\textwidth,
            enlarge x limits=0.3,
            legend style={at={(0.5,1)},
                anchor=south,legend columns=2,legend cell align=left},
            symbolic x coords={2D Bloom Filter, CF, Kirsch, CBF},
            xtick=data,
            nodes near coords align={vertical},
            ymin=1,ymax=10,
            ylabel={False positive probability},
        ]
        \addplot table[x=interval,y=Time]{\inss};
        \addplot table[x=interval,y=MOPS]{\inss};
        \legend{Time,MOPS}
    \end{axis}
\end{tikzpicture}
\caption{Insertion time of 10M keys of different use cases of 2D Bloom Filter, Cuckoo Filter (CF), Kirsch \textit{et al.} and CBF. Lower is better for Time and Higher is better for MOPS.}
\label{comp1}
\end{figure}

Cuckoo filter is quite fast filter and it is faster than our proposed Bloom Filter, 2D Bloom Filter with MMurmur, and other Bloom filters in insertion. Figure \ref{comp1} demonstrates the time taken in insertions and its MOPS. CF takes less time than other Bloom Filters. Also, it's MOPS is better than other Bloom Filters.

\pgfplotstableread[row sep=\\,col sep=&]{
interval& 	2D Bloom Filter& CF&   Kirsch& 	CBF\\
Same set&	1.822254& 0.986211&	   3.334198&	4.54018\\
Mixed set&	1.345131& 0.98687&    2.65595&	3.489044\\
Disjoint set&	0.982423& 1.05343&	2.32598& 2.712857\\
Random set&	0.993676& 1.05622&   2.283228&	2.672052\\
}\look
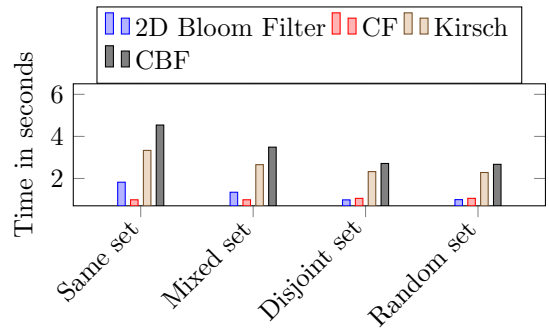
\begin{figure}[!ht]
\centering
\begin{tikzpicture}
    \begin{axis}[
            ybar,
            bar width=.1cm,
            width=0.49\textwidth,
            height=0.2\textwidth,
            enlarge x limits=0.2,
            legend style={at={(0.5,1)},
                anchor=south,legend columns=3,legend cell align=left},
            symbolic x coords={Same set,Mixed set,Disjoint set,Random set},
            xtick=data,
             x tick label style={rotate=45,anchor=east},
            nodes near coords align={vertical},
            ymin=0.7,ymax=6.5,
            ylabel={Time in seconds},
        ]
        \addplot table[x=interval,y=2D Bloom Filter]{\look};
        \addplot table[x=interval,y=CF]{\look};
        \addplot table[x=interval,y=Kirsch]{\look};
        \addplot table[x=interval,y=CBF]{\look};
        \legend{2D Bloom Filter,CF,Kirsch,CBF}
    \end{axis}
\end{tikzpicture}
\caption{Time taken in lookup of 10M keys with different use cases of 2D Bloom Filter, Cuckoo Filter (CF), Kirsch \textit{et al.} and CBF. Lower is better.}
\label{comp2}
\end{figure}

In the lookup of 10M keys, the performance of 2D Bloom Filter and CF are similar. Noteworthy that CF outperforms other Bloom Filters in same set and mixed sets. However, 2D Bloom Filter outperforms CF and other Bloom Filters in disjoint set and random set. Therefore, CF is useful in a confined environment where most of the queries are true positives and its performance is quite satisfactory, but 2D Bloom Filter is useful in random environment where most of the queries are true negatives.

\pgfplotstableread[row sep=\\,col sep=&]{
interval&	2D Bloom Filter& CF&	Kirsch& 	CBF\\
Same set&	5.48770917775458& 10.1398179497085&	2.99922200181273&	2.20255584580347\\
Mixed set&	7.43422016145639& 10.1330469058741&	3.76513112069128&	2.866114614777\\
Disjoint set&	10.1789147851791& 9.49279971141889&	4.29926310630358&	3.68615079969199\\
Random set&	10.063642475012& 9.55821911261494&	4.37976408838714&	3.7424421381021\\
}\lookm
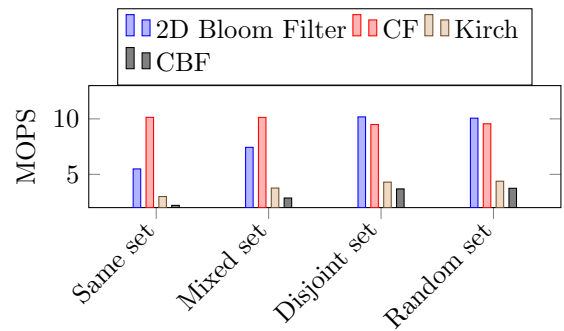
\begin{figure}[!ht]
\centering
\begin{tikzpicture}
    \begin{axis}[
            ybar,
            bar width=.1cm,
            width=0.49\textwidth,
            height=0.2\textwidth,
            enlarge x limits=0.2,
            legend style={at={(0.5,1)},
                anchor=south,legend columns=3,legend cell align=left},
            symbolic x coords={Same set,Mixed set,Disjoint set,Random set},
            xtick=data,
             x tick label style={rotate=45,anchor=east},
            nodes near coords align={vertical},
            ymin=2,ymax=13,
            ylabel={MOPS},
        ]
        \addplot table[x=interval,y=2D Bloom Filter]{\lookm};
        \addplot table[x=interval,y=CF]{\lookm};
        \addplot table[x=interval,y=Kirsch]{\lookm};
        \addplot table[x=interval,y=CBF]{\lookm};
        \legend{2D Bloom Filter,CF,Kirch,CBF}
    \end{axis}
\end{tikzpicture}
\caption{MOPS in lookup of 10M keys with different use cases in 2D Bloom Filter, CF, Kirsch \textit{et al.}, and CBF. Higher is better.}
\label{comp3}
\end{figure}

MOPS of CF is higher than other Bloom Filters in same set and mixed sets. However, 2D Bloom Filter outperforms CF and other Bloom Filters in disjoint set and random set. Undoubtedly, CF is the fastest filter, but it suffers due to kicking operation in negative queries.

\pgfplotstableread[row sep=\\,col sep=&]{
interval& 	2D Bloom Filter& CF&	Kirsch& 	CBF\\
Mixed set&	0.000340& 0.0005592&	0.001028&	0.001027\\
Disjoint set&	0.000037& 0.500877&	0.001012&	0.000996\\
Random set&	0.000036& 0.995351&	0.000989&	0.000998\\
}\fp
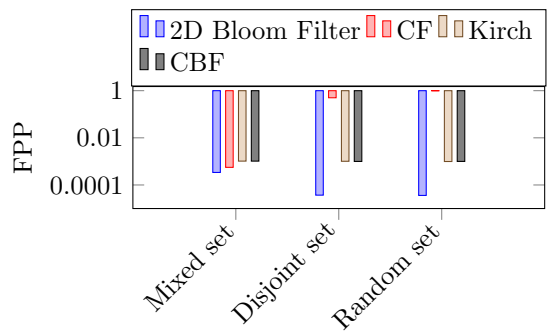
\begin{figure}[!ht]
\centering
\begin{tikzpicture}
    \begin{axis}[
            ybar,
            bar width=.1cm,
            width=0.44\textwidth,
            height=.2\textwidth,
            enlarge x limits=0.5,
            ymode=log,
            legend style={at={(0.5,1)},
                anchor=south,legend columns=3,legend cell align=left},
            symbolic x coords={Mixed set,Disjoint set,Random set},
            xtick=data,
             x tick label style={rotate=45,anchor=east},
            log ticks with fixed point,
            ymin=0.00001,ymax=1.5,
            ylabel={FPP},
        ]
        \addplot table[x=interval,y=2D Bloom Filter]{\fp};
        \addplot table[x=interval,y=CF]{\fp};
        \addplot table[x=interval,y=Kirsch]{\fp};
        \addplot table[x=interval,y=CBF]{\fp};
        \legend{2D Bloom Filter,CF,Kirch,CBF}
    \end{axis}
\end{tikzpicture}
\caption{FPP in lookup of 10M keys with different use cases in 2D Bloom Filter, CF, Kirsch \textit{et al.}, and CBF. Lower is better.}
\label{comp4}
\end{figure}

\begin{table}[!ht]
    \centering
    \caption{Accuracy of 2D Bloom Filter, CF, Kirsch \textit{et al.}, and CBF in lookup of 10M keys with different use cases. (in percentage \%)}
    \begin{tabular}{|p{1cm}|p{1.2cm}|p{1cm}|p{1cm}|p{1cm}|}
    \hline
        \textbf{Use cases}&	\textbf{2D Bloom Filter}&	\textbf{CF}& 	\textbf{Kirsch}& 	\textbf{CBF}\\ \hline
        Mixed set&	99.966& 	99.94408&	99.8972&	99.8973\\ \hline 
        Disjoint set&	99.9963&	42.51&	99.8988&	99.9004\\ \hline
        Random set&	99.9964&	0.4649&	99.9011&	99.9002\\ \hline
    \end{tabular}
    
    \label{comp5}
\end{table}

False positive rate is the most important criteria to opting a filter. All filter shows zero false positives in the same set. However, there are different false positive rate in mixed set. 2D Bloom Filter out performs all other filters in false positive rate. The false positive rate of CF in disjoint set and random set is nearly '1'. This happens due to kicking process in negative queries. Nevertheless, CF outperforms Kirsch and CBF in mixed set, but both Bloom Filter outperforms CF in disjoint set and random set as depicted in Figure \ref{comp4}. From the above benchmark, we found that CF is not suitable for some situation even though it is a fast filter. Kirch \textit{et al.} uses two Murmur2 hash function calls and the rest are manipulated better technique to reduce execution time, but still, it uses 10 hash functions for 10M items with desired false positive probability of 0.001. CBF performs moderate in all cases. However, CBF outperforms Kirsch \textit{et al.} in false positive rate. Therefore, the accuracy of 2D Bloom Filter, CF, Kirsch \textit{et al.}, and CBF are demonstrated in Table \ref{comp5}. CF exhibits lowest accuracy in disjoint set and random set.

\subsection{Evolutionary Deep Learning} 

As discussed above, the proposed malicious URL detection method consists two major components: self-adjusted Bloom Filter and evolutionary deep neural network. The self-adjusted Bloom Filter is used to block the queried URL, say $\psi$ based on its membership $\mu\mathbb{BF}$ or $\beta\mathbb{BF}$. Whereas, the evolutionary deep neural network is used to classify the newly URL $\psi$ whose membership is not defined in learn Bloom Filter. Though, deep learning models perform well in most of the classification problems, the performance depends on designing of architecture of neural network and tuning of hyper-parameters. On the other hand, evolutionary deep learning tackles both architecture and hyper-parameters of neural network. We have considered recently developed, evolutionary convolutional neural network (evoCNN)~\cite{sun2020automatically} for classifying queried new URL $\psi$. Before deployment of evoCNN, the model has to be trained on URL data.

\subsubsection{Prepossessing}
The evoCNN implemented on tensorflow platform~\cite{abadi2015tensorflow}  accepts specific shape of input dataset. Therefore, the dataset has to be processed and reshaped to fit the required input format of evoCNN. 
\begin{itemize}
    \item \textit{NaN value removal:}  Presence of NaN value in the dataset affects training of model and the model may not learn properly. Therefore, all NaN values present in the dataset is replaced with zeros.
    \item \textit{Zero padding:} Generally, the shape of input considered for the model as a square matrix. The dataset may not contain required numbers of features to rearrange those as square matrix. Therefore, additional zeros are added to complete the required shape of square matrix as shown below: 
    
    $[3,5,0,1,6,2,4]\Longrightarrow [3,5,0,1,6,2,4,0,0]$ $\longleftarrow$appended two zeros
    
    \item \textit{Input reshaping:} The evoCNN model takes $2D$ image like data to work on convolution layers. The zero padded individual instances in URL dataset is still $1D$ data, which requires to reshape into $2D$ image like data. Each instance in the URL data contains 79 features, so two zeros are appended to reshape it to $9\times 9$ matrix. In addition to this, though there has no RGB features as we have in case of colored images, still additional one dimension have to added. We considered only one channel, another dimension has to be added to this. Thus, finally each instance in URL data has been reshaped as $4D$ data. An example of $3\times 3$ to $4D$ is shown below:  
    
    \begin{equation*}
    \begin{bmatrix}
    3 & 5 & 0\\
    1 & 6 & 2\\
    4 & 0 & 0
    \end{bmatrix}
    \Longrightarrow
    \begin{bmatrix}
     \ldots 
    \begin{bmatrix}
     \ldots
    \begin{bmatrix}
    3 & 5 & 0\\
    1 & 6 & 2\\
    4 & 0 & 0
    \end{bmatrix}
    \ldots
    \end{bmatrix}
    \ldots
     \end{bmatrix}
    \end{equation*}

\end{itemize}

\subsubsection{Experimental setup}
We have considered URL dataset~\cite{Mamun,data}, which contains five different categories of URLs: spam, defacement, malware, phishing and benign. Among these first five are broadly classified as malignant. The dataset contains, separate sets of URL features for each of the four malignant categories labeled as benign or specific malignant categories. In addition, one set contains all labeled categories. All these five sets are labeled into classes malignant and benign, irrespective of their malignant category. Experimentation is done these five datasets. For training and testing of evoCNN on these five datasets different parameter values are considered as follows. Parameters related to GA are set as: number of generations 50, population size 50, and others kept default values. Parameters related to evoCNN model are set as: batch size 100, number of epochs 10, cross-entropy loss function and Adam optimizer. The maximum lengths of the convolution layers, the pooling  layers,  and  the  fully  connected  layers  are  set as same for all, i.e., $5$. For each of five datasets,  $60\%$ training, $25\%$ validation and $15\%$ testing are considered. The size of training, validation and testing for each of the datasets along with total no of samples are shown in the Table~\ref{tab:urlData}.

\begin{table}[!ht]
    \centering
    \begin{tabular}{|p{1.5cm}|p{1.2cm}|p{1.2cm}|p{1.2cm}|p{1.2cm}|}
    \hline
      \textbf{Datasets} &\textbf{\#Instances}   & \textbf{\#Training} & \textbf{\#Validation}&\textbf{\#Testing}\\ \hline
       Spam  & 14479& 8687&4923 & 869\\ \hline
       Defacement  &15711 & 9426& 5342 & 943 \\ \hline
       Malware  & 14493 & 8695 & 4928& 870\\ \hline
       Phishing  &15367 & 9220 & 5224& 923\\ \hline
       All  & 36707& 22024 & 12480 & 2203\\ \hline
 
    \end{tabular}
    \caption{Details about datasets and sizes of training, validation and testing instances.}
    \label{tab:urlData}
\end{table}

\subsubsection{URL Classification Results}
The results obtained with evoCNN for URL classification are presented in Figure~\ref{training} and Figure~\ref{testing}. The URL classification with evoCNN shows training accuracy ranging $98\%$ to $100\%$ and training loss ranging $15\%$ to $19\%$. Results on datasets with individual malignant categories as well as all combined shows high training accuracy and marginal loss. Interestingly, testing results also show high accuracy ranging $95\%$ to $98\%$ and a similar amount of loss as training. Thus, the deployment of evoCNN in the proposed architecture enables highly accurate classification of new URLs to the LBF.

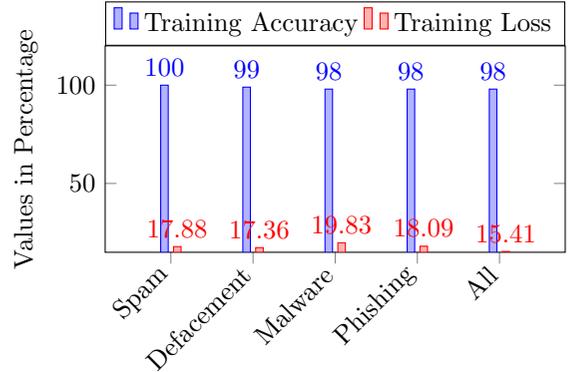
\begin{figure}[!ht]
\centering

\begin{tikzpicture}
\begin{axis}[
ybar,
bar width=.1cm,
width=0.48\textwidth,
height=0.2\textheight,
enlarge x limits = 0.2,
legend style={at={(0.5,1)},
                anchor=south,legend columns=2,legend cell align=left},
ylabel={Values in Percentage},
symbolic x coords={Spam, Defacement, Malware, Phishing, All},
xtick=data,
 x tick label style={rotate=45,anchor=east},
nodes near coords,
ymin=15,ymax=120,
]
\addplot coordinates {(Spam,100) (Defacement,99) (Malware,98) (Phishing,98) (All, 98)};
\addplot coordinates {(Spam,17.88) (Defacement,17.36) (Malware,19.83) (Phishing,18.09) (All, 15.41)};

\legend{Training Accuracy,Training Loss}
\end{axis}
\end{tikzpicture}
\caption{Training accuracy and loss of evoCNN on URL classification}
\label{training}
\end{figure}

\begin{figure}[!ht]
\centering

\begin{tikzpicture}
\begin{axis}[
ybar,
bar width=.1cm,
width=0.48\textwidth,
height=0.2\textheight,
enlarge x limits = 0.2,
legend style={at={(0.5,1)},
                anchor=south, legend columns=2,legend cell align=left},
ylabel={Values in Percentage},
symbolic x coords={Spam, Defacement, Malware, Phishing, All},
xtick=data,
 x tick label style={rotate=45,anchor=east},
nodes near coords,
ymin=15,ymax=120,
]
\addplot coordinates {(Spam,98.39) (Defacement,96.15) (Malware,97.85) (Phishing,96.29) (All, 96.35)};
\addplot coordinates {(Spam,19.37) (Defacement,17.38) (Malware,19.09) (Phishing,27.91) (All, 19.56)};

\legend{Testing Accuracy,Testing Loss}
\end{axis}
\end{tikzpicture}
\caption{Testing accuracy and loss of evoCNN on URL classification}
\label{testing}
\end{figure}
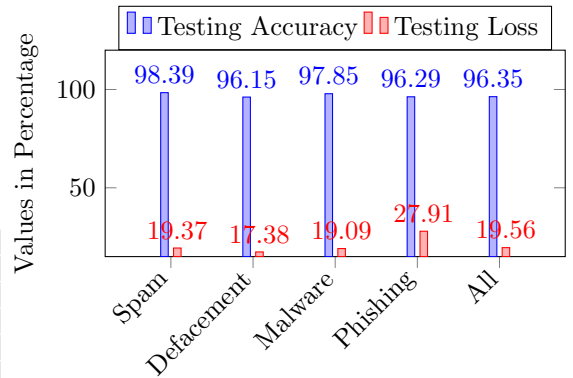

\subsection{deepBF in action}
LBF is tested using the output of the evoCNN with the dataset \cite{Mamun}. We have classified malignant and benign of all data. Therefore, there are total $129988$ malignant and $35378$ benign URLs as combined. We present this experimentation in two fold Firstly, $\mu\mathbb{BF}$ and $\beta\mathbb{BF}$ are empty. Secondly, $\mu\mathbb{BF}$ is filled with malignant URLs and tested using benign URLs.

\begin{table}[!ht]
    \centering
    \caption{Accuracy and performance testing through deduplication of malicious URLs.}
    \begin{tabular}{|p{1cm}|p{1cm}|p{1.2cm}|p{1.2cm}|p{1cm}|}
    \hline
   \textbf{Filters}&  \textbf{FPP}&	\textbf{Dedup time}&	\textbf{Accuracy}&	\textbf{Memory in KB}\\ \hline
2D Bloom Filter&	0.002523&	0.073035&	99.7477&	252.098\\ \hline
CF&	0.0000385&	0.202823&	99.996 &	488.328 \\ \hline
Kirsch &	0.071814&	0.096732&	92.8186&	228.1396 \\ \hline
CBF &	0.077876 &	0.087116 &	92.2124 &	912 \\ \hline
    \end{tabular}
    \label{mcomp}
\end{table}

Table \ref{mcomp} demonstrates performance of 2D Bloom Filter, CF, Kirsch \textit{et al.}, and CBF using deduplication of malignant URLs. In terms of accuracy, CF exhibits highest accuracy, however, it takes high memory. 2D Bloom Filter is the fastest filter in the deduplication process and CF is the slowest. Kirsch \textit{et al.} takes lowest memory while CBF consumes the highest memory.

\begin{table}[!ht]
    \centering
    \caption{Comparison of various Bloom Filter with 2D Bloom Filter for malicious URL detection by inserting malignant URLs and testing using benign URLs.}
    \begin{tabular}{|p{1cm}|p{1cm}|p{1cm}|p{1cm}|p{1cm}|p{1cm}|}
    \hline
     \textbf{Filter}& 	\textbf{FPP}&	\textbf{Insertion time}&\textbf{Lookup time}&	\textbf{Memory in KB}&	\textbf{Accuracy}\\ \hline
    \textbf{2D Bloom Filter}&	0.000283&	0.051451&	0.013258&	252.098&	99.97\\ \hline
    \textbf{CF}&	1&	0.091545&	0.02458&	488.328&	0\\ \hline
   \textbf{Kirsch}&	0.000763&	0.069181&	0.019478&	228.139&	99.92\\ \hline
    \textbf{CBF}&	0.000537&	0.044664&	0.015823&	912&	99.95\\ \hline
    \end{tabular}
    \label{mcomp1}
\end{table}

Table \ref{mcomp1} demonstrates the comparison of 2D Bloom Filter with CF, Kirsch \textit{et al.}, and CBF for false positive probability of $0.001$. In this experiment, malignant URLs are input to $\mu\mathbb{BF}$ and tested with benign URLs for accuracy. 2D Bloom Filter exhibits the lowest false positive rate and lookup time. Also, 2D Bloom Filter has highest accuracy with optimal memory sized. CBF consumed the highest memory which is $912~KB$ but exhibits the fastest insertion time. Similarly, CF also takes higher memory than 2D Bloom Filter and Kirsch \textit{et al}. CF exhibits $100\%$ false positive rate and thus its accuracy is zero. Also, it exhibits the highest insertion and lookup time. Kirsch \textit{et al.} occupies the lowest memory.

\section{Analysis}
\label{Ana}
deepBF uses 2D Bloom Filter and a cell can accommodate many input items, since, an input item occupies a single bit.  For example, \textbf{unsigned long int} occupies $8~bytes$. Therefore, the cell can retain information of at most $64$ different input items. However, it depends on the prime number $\beta$. The $\beta=64$ is not a prime number, thus, the collision probability in a cell is high. However, $\beta=61$ can lower the collision probability in a cell.

\begin{theorem}\label{th1}
Let, $\mathcal{S}=\{s_1,s_2,s_3,\ldots,s_m\}$ be the input set. Let, $\mathbb{BF}$ is the 2D Bloom Filter and $\mathcal{S}$ is inserted into $\mathbb{BF}$. 2D Bloom Filter exhibits low performance in lookup for same set.  
\end{theorem}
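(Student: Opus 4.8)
The plan is to reduce the claim to an analysis of the cost of the lookup routine in Algorithm~\ref{algo2}, whose defining feature is that the membership checks are arranged as a short-circuited conjunction: the nested \texttt{if} statements evaluate the check on $h_1$ first and proceed to the next only upon a \emph{true} result, aborting the instant any check returns \emph{false}. I would therefore adopt the natural cost model in which the work of a single lookup is measured by the number of hash evaluations (equivalently, bit-position probes) executed before the routine terminates. Writing $\lambda$ for the number of such checks performed by the $2$D filter, and noting that by Definition~\ref{def11} the reported time and MOPS are monotone in this probe count, it suffices to compare expected probe counts across the test cases.

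First I would handle the same set. By Definition~\ref{def7} we have $\mathcal{Q}=\mathcal{S}$, so every queried item $s_i$ is a genuine member that was inserted via Equation~\eqref{eq5}. Hence all $\lambda$ bits associated with $s_i$ are already set, every check on $h_1,\ldots,h_\lambda$ returns \emph{true}, and the short-circuit never fires: each query costs exactly $\lambda$ probes. This is the maximal possible per-query cost of the routine, and it is incurred deterministically for all $m$ queries.

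Next I would contrast this with the disjoint and random sets (Definitions~\ref{def9},~\ref{def10}), where a queried item is (almost surely) a non-member. Using the bit-occupancy analysis behind Equations~\eqref{eq1}--\eqref{eq2}, the probability that any one probed bit is already set is $p=1-e^{-\lambda n/m}<1$. For a non-member the routine reaches the $k$-th check only if the preceding $k-1$ bits were (falsely) set, an event of probability $p^{k-1}$, so by the tail-sum formula the expected probe count is
\begin{equation*}
\mathbb{E}[T]=\sum_{k=1}^{\lambda} p^{k-1}=\frac{1-p^{\lambda}}{1-p}<\lambda,
\end{equation*}
strictly below the deterministic cost $\lambda$ of the same set whenever $p<1$. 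Thus the same set forces the largest number of hash evaluations per lookup, which by the monotonicity of the cost model yields the highest lookup time and lowest MOPS --- precisely the ``low performance'' asserted, consistent with Figures~\ref{hash2} and~\ref{comp2}.

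The main obstacle, and the step I would spend the most care on, is pinning this down as a genuine expected inequality rather than an artefact: one must argue that the truncated geometric sum $\tfrac{1-p^{\lambda}}{1-p}=1+p+\cdots+p^{\lambda-1}$ is strictly below $\lambda$ in the operating regime $p=1-e^{-\lambda n/m}$ at the design point $\varepsilon=0.001$ (where $p$ sits well under $1$), and one must acknowledge that the statement is comparative --- ``low'' relative to the mixed, disjoint and random cases --- rather than an absolute slowdown. A secondary subtlety is the mixed set (Definition~\ref{def8}): there a fraction of queries are members and pay the full $\lambda$, so its expected cost interpolates between the two extremes, and I would show it too lies strictly below the same-set cost, thereby completing the ordering that makes the same set the worst performer.
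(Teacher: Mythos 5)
Your proof rests on exactly the same observation as the paper's: because every query in the same set is a genuine member, none of the nested \textbf{IF} conditions in Algorithm~\ref{algo2} can fail, so the short-circuit never triggers and every lookup must invoke Equation~\eqref{eq6} for all of $h_1,\ldots,h_5$ --- the paper phrases this as ``no early termination'' and concludes that lookup therefore costs about as much as insertion. Where you go beyond the paper is in making the comparison quantitative: the paper's proof stops at that qualitative observation and defers the contrast with non-members to Theorem~\ref{th2} and Corollary~\ref{col1} (both also qualitative) and to the experiments, whereas you compute the expected probe count for a non-member as the truncated geometric sum $(1-p^{\lambda})/(1-p)<\lambda$ with $p=1-e^{-\lambda n/m}$, which actually establishes the ordering that makes ``low performance'' a provable comparative statement rather than an empirical remark, and correctly notes that the mixed set interpolates between the two extremes. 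This buys rigor and unifies Theorems~\ref{th1}, \ref{th2} and Corollary~\ref{col1} under a single cost model. The one caveat is that in Algorithm~\ref{algo2} the five hash values themselves are computed unconditionally before the \textbf{IF} cascade, so the short-circuiting applies only to the Equation~\eqref{eq6} probes (bit tests and the associated memory accesses), not to the hashing; your parenthetical equating ``hash evaluations'' with ``bit-position probes'' should be read in that restricted sense, which is also how the paper's own proof reads.
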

\begin{proof}
Same set is defined in Definition \ref{def7}. The query set $\mathcal{S}=\mathcal{Q}$. In this case, lookup process has to invoke Equation \eqref{eq6} for hash value $h_1$, $h_2$, $h_3$, $h_4$ and $h_5$ as shown in Algorithm \ref{algo2}. Invoking Equation \eqref{eq6} for all hash value are true, and hence, there are no early termination of any \textbf{IF} condition in Algorithm \ref{algo2}. Thus, it takes similar time as insertion.
\end{proof}

\begin{theorem}\label{th2}
2D Bloom Filter exhibits high performance in disjoint set.
\end{theorem}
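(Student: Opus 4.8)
The plan is to mirror the structure of the proof of Theorem~\ref{th1}, but exploit the opposite behavior of the \textbf{IF} conditions in Algorithm~\ref{algo2}. The key observation is that for a disjoint set (Definition~\ref{def9}), the query set $\mathcal{Q}$ satisfies $\mathcal{Q}\cap\mathcal{S}=\phi$, so every queried item $\mathcal{K}$ is a true negative: it was never inserted into $\mathbb{BF}$. First I would recall that the membership query of Algorithm~\ref{algo2} evaluates Equation~\eqref{eq6} for the five hash values $h_1,\ldots,h_5$ inside a cascade of nested \textbf{IF} statements, so that a later hash is tested only if all earlier ones returned \emph{true}. The running time of a single lookup is therefore governed by how many of these nested conditions are reached before one fails.

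Next I would estimate how quickly the cascade terminates. Since $\mathcal{K}\not\in\mathcal{S}$, the only way for $\Call{queryMember2D Bloom Filter}{\mathcal{K},h_1}$ to return \emph{true} is a false positive at the first hash, which by the settings of the filter (desired false positive probability $\varepsilon=0.001$, Equation~\eqref{eq2}) happens with small probability. The probability that the first hash alone reports a set bit is roughly $1-e^{-\lambda n/m}\approx 0.5$ at the optimal loading, but the crucial point is that the \emph{entire} cascade succeeds only with probability close to $\varepsilon$, and in the overwhelming majority of queries the first or second \textbf{IF} fails immediately, triggering early termination. Thus the expected number of Equation~\eqref{eq6} evaluations per lookup is close to one or two, rather than the full five that the same set requires per Theorem~\ref{th1}.

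Having bounded the expected work per query, I would conclude that the total lookup time over the disjoint set is far smaller than the insertion time (which always performs all five hash evaluations per item, cf.\ Algorithm~\ref{algo1}) and smaller than the same-set lookup time. This matches the empirical data reported in Figure~\ref{hash2} and Figure~\ref{comp2}, where the disjoint set lookup is the fastest of all use cases, and in Figure~\ref{hash3} and Figure~\ref{comp3}, where it attains the highest MOPS. I would therefore frame the argument as: early termination of the nested \textbf{IF} conditions in Algorithm~\ref{algo2} reduces the number of invoked hash evaluations per negative query, so 2D Bloom Filter exhibits high lookup performance on a disjoint set.

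The hard part is that the statement is really an average-case performance claim, and the nested-\textbf{IF} structure makes a clean proof hinge on correctly accounting for the early-termination probabilities; a fully rigorous treatment would need the per-bit occupancy probability from Equation~\eqref{eq2} together with an independence assumption across the five hashes, which is exactly the kind of heuristic the paper uses elsewhere. I expect the proof to remain at the same informal level as Theorem~\ref{th1}, arguing qualitatively that almost all negative queries fail an early \textbf{IF} and therefore skip most hash evaluations, rather than deriving an exact expected running time.
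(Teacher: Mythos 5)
Your proposal is correct and rests on exactly the same idea as the paper's own proof: for a disjoint query set every lookup is a true negative, so the nested \textbf{IF} cascade in Algorithm~\ref{algo2} terminates early and most hash evaluations via Equation~\eqref{eq6} are skipped, which the paper likewise cites as the reason for the high performance observed experimentally. The only difference is that you sketch a probabilistic estimate of the expected number of evaluations per query, whereas the paper stays entirely qualitative; the underlying argument is the same.
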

\begin{proof}
The disjoint set is defined in Definition \ref{def9}. The necessary condition for disjoint set is $\mathcal{S}\cap\mathcal{Q}=\phi$. 2D Bloom Filter shows excellent performance in this case. Any negative query can be detected by as early as possible by \textbf{IF} condition in Algorithm \ref{algo2}. Therefore, 2D Bloom Filter terminates as early as possible if detected as negative query. Therefore, it shows excellent performance which is also shown in experimental results. 
\end{proof}

\begin{corollary}\label{col1}
2D Bloom Filter exhibits medium performance for mixed set.
\end{corollary}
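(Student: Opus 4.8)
The plan is to decompose the mixed query set into its positive and negative halves and then apply Theorems~\ref{th1} and~\ref{th2} to each half separately. By Definition~\ref{def8}, a mixed set is $\mathcal{Q}=\{q_1,q_2\}$ with $q_1\subset\mathcal{S}$ and $q_2\cap\mathcal{S}=\phi$. First I would observe that the queries drawn from $q_1$ are exactly positive queries against $\mathbb{BF}$: every cell tested by Equation~\eqref{eq6} was set during insertion, so none of the nested \textbf{IF} conditions of Algorithm~\ref{algo2} can terminate early, and each such lookup incurs the full five--hash cost established in the proof of Theorem~\ref{th1}. Symmetrically, the queries drawn from $q_2$ are negative queries, and by the argument in the proof of Theorem~\ref{th2} they are almost always rejected at the first (or an early) \textbf{IF}, incurring the low cost of the disjoint case.

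Next I would combine the two contributions. Writing $T_{\mathrm{same}}$ and $T_{\mathrm{disj}}$ for the per-query costs of the same set and the disjoint set respectively, and letting $\alpha=|q_1|/|\mathcal{Q}|$ be the fraction of matching queries, the expected lookup cost for the mixed set is the convex combination $\alpha T_{\mathrm{same}}+(1-\alpha)T_{\mathrm{disj}}$. Since $T_{\mathrm{disj}}<T_{\mathrm{same}}$ by Theorems~\ref{th1} and~\ref{th2}, this quantity lies strictly between $T_{\mathrm{disj}}$ and $T_{\mathrm{same}}$ whenever $0<\alpha<1$; with the $50\%$ match rate used for the mixed set it sits near the midpoint. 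This is precisely the sense in which the mixed set yields \emph{medium} performance, and it matches the ordering of lookup times reported in Figure~\ref{hash2} and Figure~\ref{comp2}.

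The step I expect to be the main obstacle is the word \emph{almost} in the treatment of $q_2$: a negative query need not terminate at the very first \textbf{IF}, because a false positive can leave an early cell spuriously set, letting the query pass one or more conditions before being rejected. I would control this by noting that the false positive probability is bounded by the desired $\varepsilon=0.001$, so the expected number of extra \textbf{IF} evaluations per negative query is $O(\varepsilon)$ and contributes only a lower-order term to $(1-\alpha)T_{\mathrm{disj}}$. Hence the convex-combination bound is unaffected to first order, and the medium-performance conclusion stands. The only genuinely informal point is the meaning of ``medium''; I would fix it once as ``lookup cost strictly between the disjoint-set and same-set costs,'' which the convex combination delivers immediately.
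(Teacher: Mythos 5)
Your argument is correct, but it is a genuinely different --- and considerably stronger --- route than the paper's. The paper attaches no proof environment to Corollary~\ref{col1} at all: the paragraph that follows it merely restates Definition~\ref{def8} and then asserts that the medium performance ``is shown in the experimental results,'' i.e., the paper's justification is purely empirical. You instead derive the claim analytically by splitting $\mathcal{Q}$ into the positive part $q_1$ (full five-hash cost, by the no-early-termination argument of Theorem~\ref{th1}) and the negative part $q_2$ (early termination, by Theorem~\ref{th2}), and expressing the mixed-set cost as the convex combination $\alpha T_{\mathrm{same}}+(1-\alpha)T_{\mathrm{disj}}$, which for $0<\alpha<1$ lies strictly between the two extremes. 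This buys an actual deduction of the corollary from the two preceding theorems, together with an explicit, falsifiable meaning of ``medium,'' whereas the paper's version buys nothing beyond consistency with Figure~\ref{hash2}. Your reading of Definition~\ref{def8} and of Algorithm~\ref{algo2} matches the paper's.

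One quantitative step in your error control is wrong, though it does not sink the argument. You claim the expected number of extra \textbf{IF} evaluations per negative query is $O(\varepsilon)$ with $\varepsilon=0.001$. But $\varepsilon$ is the probability that \emph{all} $\lambda$ checks pass spuriously; each \emph{individual} check in Equation~\eqref{eq6} passes for a non-member with probability equal to the load factor $1-e^{-\lambda n/m}$, which at the optimum of Equation~\eqref{eq3} is about $1/2$. So a negative query performs on average roughly $\sum_{i\ge 0}2^{-i}\approx 2$ hash evaluations rather than $1+O(\varepsilon)$. Since $2$ is still well below the $5$ evaluations of a positive query, $T_{\mathrm{disj}}<T_{\mathrm{same}}$ survives and the convex-combination conclusion stands; you should simply replace the $O(\varepsilon)$ bound with the geometric-series estimate.
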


Definition \ref{def8} defines a mixed set as $\mathcal{Q}=\{q_1,q_2\}$ where $q_1\subset\mathcal{S}$ and $q_2\cap\mathcal{S}=\phi$ or $q_1\cap\mathcal{S}=\phi$ and $q_2\subset\mathcal{S}$. In this case, 2D Bloom Filter exhibits medium performance which is shown in the experimental results.

\begin{theorem}\label{th3}
Let, $\zeta^\mathcal{K}$ be a cryptography string hash function of input item $\mathcal{K}$, $\varsigma^\mathcal{K}$ be the hash value of $\zeta^\mathcal{K}$, $\Upsilon^\mathcal{K}$ be the non-cryptography string hash function of input item $\mathcal{K}$ and $\upsilon^\mathcal{K}$ be the hash value of $\Upsilon^\mathcal{K}$. The performance of Bloom Filter $\mathbb{B}$ using $\upsilon^\mathcal{K}$ is higher than  $\varsigma^\mathcal{K}$.
\end{theorem}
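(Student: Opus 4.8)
The plan is to reduce the claim to a comparison of total processing times, exploiting the fact that the Bloom Filter $\mathbb{B}$ invokes the \emph{same} number $\lambda$ of hash calls regardless of whether the underlying function is the cryptographic $\zeta^\mathcal{K}$ or the non-cryptographic $\Upsilon^\mathcal{K}$, and that the subsequent bit-level work (the modulus operations together with the set/test of a cell via Equations \eqref{eq5} and \eqref{eq6}) is identical in either case. By Definition \ref{def11}, performance is measured by $MOPS=\frac{n}{\tau\times 1000000}$, which is strictly decreasing in the total time $\tau$; hence proving ``higher performance'' for $\upsilon^\mathcal{K}$ is equivalent to proving ``smaller $\tau$'' for the non-cryptographic filter.

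First I would decompose the total time for inserting or querying $n$ items as
\begin{equation*}
\tau = n\,\lambda\,\bigl(t_{h} + t_{m}\bigr),
\end{equation*}
where $t_{h}$ is the average cost of a single hash evaluation and $t_{m}$ is the fixed cost of the three modulus operations plus the bitwise set/test of a cell. Since $n$, $\lambda$, and $t_{m}$ are common to both filters, the only term that differs is $t_{h}$. Writing $t_{h}^{\zeta}$ for the cryptographic function producing $\varsigma^\mathcal{K}$ and $t_{h}^{\Upsilon}$ for the non-cryptographic function producing $\upsilon^\mathcal{K}$, the entire theorem collapses to establishing the single strict inequality $t_{h}^{\Upsilon} < t_{h}^{\zeta}$, which then forces $\tau^{\Upsilon} < \tau^{\zeta}$ and therefore a larger $MOPS$ for $\upsilon^\mathcal{K}$.

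Next I would justify this inequality from the design principles already recalled in the \emph{Operations} and \emph{Hashing Techniques} subsections of Section \ref{pr}. A cryptographic function such as MD5 or SHA2 is deliberately engineered to be computationally expensive so as to resist brute-force inversion and to enforce the avalanche property, executing many more rounds and instructions per input fragment; a non-cryptographic function like Murmur, by contrast, optimises every operation for speed, replacing costly operators with low-cost bitwise ones and reducing the number of operations. Consequently each cryptographic evaluation dominates each non-cryptographic evaluation instruction-for-instruction, yielding $t_{h}^{\zeta} > t_{h}^{\Upsilon}$.

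I expect the main obstacle to be that ``performance'' is an empirical, hardware-dependent quantity rather than a purely combinatorial one, so the decisive step $t_{h}^{\Upsilon} < t_{h}^{\zeta}$ cannot be derived from the Bloom Filter model alone; it rests on the intrinsic instruction count of the two hash families. I would anchor it either by appealing to the per-fragment instruction counts noted for the functions in Section \ref{pr} (and corroborated by the measured timings in Figure \ref{hash1}), or by invoking the security requirement that \emph{forces} a cryptographic hash to perform non-trivial work, thereby giving a lower bound on $t_{h}^{\zeta}$ that strictly exceeds the near-minimal $t_{h}^{\Upsilon}$ of a speed-optimised function.
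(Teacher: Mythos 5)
Your proposal is correct in substance but follows a narrower route than the paper. You reduce the whole theorem to the single inequality $t_{h}^{\Upsilon} < t_{h}^{\zeta}$ via the clean decomposition $\tau = n\,\lambda\,(t_{h}+t_{m})$ and the monotonicity of $MOPS$ in $\tau$; the paper's proof also opens with exactly this speed argument ($\zeta^\mathcal{K}$ is much slower than $\Upsilon^\mathcal{K}$), but it spends most of its effort on a second prong you omit: it compares the digest lengths ($\varsigma^\mathcal{K}$ is 128--256 bits versus 32--64 bits for $\upsilon^\mathcal{K}$) and argues that since both must be reduced modulo $m$ to index the filter, the modulus operation destroys the distributional advantage of the cryptographic digest, so $h_\zeta=\varsigma^\mathcal{K}\,\%\,m$ is no better distributed than $h_\Upsilon=\upsilon^\mathcal{K}\,\%\,m$. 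That second prong is what defends your implicit assumption that $\lambda$, $m$, and the resulting false positive probability are held common across the two filters; without it, one could object that a cryptographic hash buys a lower FPP and hence fewer hash calls or less memory, offsetting its per-call cost. Your approach buys a sharper quantitative skeleton (the explicit $\tau$ decomposition and the reduction to one inequality, plus an honest acknowledgement that $t_{h}^{\Upsilon} < t_{h}^{\zeta}$ is ultimately an empirical or design-principle claim rather than a theorem of the model); the paper's approach buys immunity to the accuracy objection at the cost of being looser about timing. Neither argument is fully rigorous, but if you add one sentence noting that the mod-$m$ reduction nullifies any distributional benefit of the longer cryptographic digest, your proof covers everything the paper's does.
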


\begin{proof}
If $\zeta^\mathcal{K}$ is MD5, SHA1 or SHA256,  then $\varsigma^\mathcal{K}$ is 128 bits, 160 bits or 256 bits long. The $\upsilon^\mathcal{K}$ can be either 32 bits or 64 bits long. In our experiment, we have used 32 bits hash functions. Therefore, $\varsigma^\mathcal{K}>\upsilon^\mathcal{K}$. The hash functions are used to distribute the keys fairly among available slots of Bloom Filter. Undoubtedly, the SHA256 or SHA512 produces strong hash values which can be used to hash the keys among the available slots. However, there is a modulus operator in hashing techniques to map a key in the slot of Bloom Filter. For instance, Bloom Filter size is $m$. Therefore, $h_\zeta=\varsigma^\mathcal{K}\% m$ should be better than $h_\Upsilon=\upsilon^\mathcal{K}\% m$. However, the ground truth differs. Firstly, $\zeta^\mathcal{K}$ is much slower than $\Upsilon^\mathcal{K}$. Secondly, $h_\zeta$ and $h_\Upsilon$ are also dependent on the value of $m$. The $m<<\varsigma^\mathcal{K}$ or $m<\upsilon^\mathcal{K}$. Therefore, the hash value is scaled under $m$ using modulus operator. The modulus operation destroys the distribution property of the hash functions. Moreover, $h_\zeta$ and $h_\Upsilon$ do not fairly distribute the keys among available Bloom Filter slots if $m$ is even number. Likewise, a MMurmur hash function has higher accuracy than Murmur hash function while the Murmur hash function is the finest non-cryptography hash function.  Therefore, the performance of Bloom Filter using $\zeta^\mathcal{K}$ lower than $\Upsilon^\mathcal{K}$. 
\end{proof}

\section{Discussion and Conclusion}
\label{Dis}

From the above experimental results, we can easily conclude that there is no requirement of the cryptography string hash function. To illustrate, the MMurmur hash function is outrun all filters where MMurmur has higher biased and redundant. Whereas, cryptography hash string hash functions have well distribution of keys. Gerbet \textit{et al.} claims that the cryptography string hash function can resist preimage and other issues. Apparently, cryptography string hash functions are not required in Bloom Filter which has been  proved experimentally in the experimental results and Theorem \ref{th3}.

Observation from the experiment, CBF has higher memory footprint issue. With the same memory footprint, conventional Bloom Filter is able to gain higher accuracy than CBF. However, CBF has a false negative free Bloom Filter provided that there is no the counter underflow. CBF is easy to handle the deletion operations of Bloom Filter. However, it occupies more memory than any other filters, that is, it has a higher false positive probability. There is a few observations in CF. First, CF is not applicable is disjoint set which is defined in Definition \ref{def9}, i.e., if the input set  and query set are disjoint, then the performance of CF degrades. Also, false positive increases. Moreover, CF consumes higher memory footprint than other variant of Bloom Filters. If CF is run again and again with the same settings, then it can crash at a point of time due to poor design of hashing. CF uses murmur2 hash function which is the finest. But the utilization of murmur2 hash function with the seed value becomes vulnerable to crash. Most importantly, the FPP is not predictable in CF. The FPP changes if CF is run again and again with the same settings. Furthermore, CF memory footprint is higher if individual key sizes are large. The memory requirements depend on the individual key size.

deepBF depends on prime numbers, for instance, the dimensions $m\not=n$ of the Bloom Filter array are prime numbers. However, deepBF is able to perform with fewer hash functions due to two modulus operations in 2D Bloom Filter, which are performed by $m$ and $n$. The key drawback of deepBF is the false positive in Bloom Filters. Particularly, if $\mu\mathbb{BF}$ returns $true$ which is a false positive. Then, the valid URL is blocked. However, the false positive probability is very less as shown in our experimental results. The deepBF comprises of two-dimensional Bloom Filter (2D Bloom Filter) and evolutionary convolutional neural network (evoCNN). deepBF uses two 2D Bloom Filter for malignant and benign URLs to filter and these two filters are first layer of the scanner. Naturally, Bloom Filters are very fast and if it is placed in the first layer of the scanner, then load on the machine is reduced. Firs, URLs are queried to the filters. If the URLs are in the 2D Bloom Filters, it saves huge times. However, if a new URL is input, then both 2D Bloom Filters returns false. Therefore, evoCNN classifies the URL as malignant or benign. Again, these URLs are inserted into the 2D Bloom Filters. Thus, 2D Bloom Filter implements learning patterns. Also, deepBF depends on evoCNN. Finally, we conclude that this work can be deployed in real world project to filter out all malignant URLs effectively and efficiently in diverse devices.

\backmatter
\section*{Statements and Declarations}
\bmhead{Competing Interests}
The research work of Dr. Anupam Biswas is supported by the Science and Engineering Board (SERB), Department of Science and Technology (DST) of the Government of India under (Grant No. EEQ/2019/000657) and (Grant No. ECR/2018/000204).






\bibliography{sn-bibliography}

\end{document}